\documentclass[10pt,journal]{IEEEtran}
\IEEEoverridecommandlockouts

\usepackage{graphicx}
\usepackage{amsmath,amsthm,amsfonts,amssymb}
\usepackage{cite}
\usepackage{bm}
\usepackage{bbm}
\usepackage{url}
\usepackage{array}
\usepackage{color}
\usepackage{multirow}
\usepackage{booktabs}
\usepackage[table,xcdraw]{xcolor}
\usepackage{enumitem}
\usepackage{subcaption}

\usepackage[english]{babel}
\usepackage{algorithm}
\usepackage[noend]{algpseudocode}
\usepackage{setspace,soul}

\theoremstyle{plain}
\newtheorem{thm}{Theorem}
\newtheorem{lem}[thm]{Lemma}

\newtheorem{prop}[thm]{Proposition}
\newtheorem{cor}[thm]{Corollary}
\newtheorem{rem}{Remark}

\usepackage[english]{babel}
\usepackage{algorithm}
\usepackage[noend]{algpseudocode}

\allowdisplaybreaks[4]

\renewenvironment{quote}
  {\list{}{\leftmargin=1em \rightmargin=1em}\item\itshape}
  {\endlist}

\begin{document}
\title{Embodied Communication: Sensing-Induced Reliability Fields and Capacity Bounds}

\author{
Yulin~Shao
\thanks{The author is with the Department of Electrical and Computer Engineering, The University of Hong Kong, Hong Kong, China (ylshao@hku.hk).}
}

\maketitle

\begin{abstract}
This paper introduces embodied communication, a new wireless communication modality in which information is imprinted onto environmental states and recovered by the receiver through sensing. No dedicated communication transmitter is activated, and no additional communication spectrum is occupied; instead, the sensed environment itself becomes the carrier of information.
The key insight is that sensing must be reinterpreted for communication. Rather than asking how accurately an unknown physical state can be estimated, embodied communication asks how reliably two states can be distinguished. We formalize this idea through a multi-snapshot radio frequency (RF) sensing model and derive a sensing-induced reliability field that quantifies the distinguishability between physical states. This field turns embodied symbol design into a geometric packing problem shaped by the sensing resolution of the infrastructure.
For this embodied channel, we characterize the finite-snapshot $\epsilon$-capacity through achievable designs and converses. We develop lattice-based codebooks, obtain a closed-form hexagonal design under a main-lobe approximation, and establish information-theoretic and geometric upper bounds. We further reveal an intrinsic sensing-duration tradeoff: more sensing snapshots improve reliability, but also lengthen each embodied symbol, leading to a finite optimal sensing time. These results expose a latent communication pathway in sensing-enabled infrastructure and show how the environment can be transformed from a passive backdrop into an active information carrier.
\end{abstract}

\begin{IEEEkeywords}
Embodied communication, ISAC, ambient modulation, Sensing distinguishability, lattice packing.
\end{IEEEkeywords}

\section{Introduction}\label{sec:intro}
\subsection{Infrastructure with Sensing}
The evolution toward 6G wireless networks is marked by a fundamental reorientation of the role of infrastructure \cite{liu2022integrated,liu2022survey}. 
The base station (BS) is no longer envisioned merely as a node that transports data packets. 
Instead, it is increasingly becoming a multimodal sensing platform that perceives, interprets, and interacts with the physical world. 
Integrated sensing and communication (ISAC) \cite{liu2022integrated}, for instance, reuses radio-frequency (RF) spectral and hardware resources to endow the BS with radar-like sensing capability. 
Vision sensors such as cameras \cite{xu2022computer,zhang2024optical} and LiDAR \cite{shao2025point,bian2025over}, increasingly co-located with antenna arrays, further equip the BS with visual and geometric perception.

The sensing capability of wireless infrastructure is typically exploited in two ways.
\begin{itemize}[leftmargin=0.5cm]
    \item \textit{Beamforming assistance}. Environmental awareness is used to enhance the wireless link itself.  A representative application is predictive and site-specific beamforming \cite{liu2020radar,heng2022learning}. By tracking user mobility and reconstructing the surrounding geometry, the BS can preemptively steer beams around obstacles, maintain beam alignment during rapid movement, and reduce the overhead of channel state information (CSI) acquisition. In this mode, sensing forms a closed loop at the BS side: the BS observes the environment and immediately uses the observation to improve its own communication.
    \item \textit{Sensing as a service}. The BS may also operate as part of a distributed sensing network, extracting physical information from the environment and delivering it to downstream applications \cite{boukerche2008vehicular}. Examples include target localization, velocity tracking, 3D scene reconstruction, micro-motion detection, autonomous vehicle coordination, industrial digital twins, and smart city analytics \cite{kotaru2015spotfi,shao2021federated}. In this mode, sensing is provided as a standalone service and is not necessarily coupled to the immediate needs of a communication link.
\end{itemize}

While both modes represent important advances, they share an implicit assumption: the physical environment is a passive object of observation. 
The world is seen, tracked, and measured, but it does not intentionally carry information for the BS to decode.

\begin{figure}[t]
  \centering
  \includegraphics[width=0.9\columnwidth]{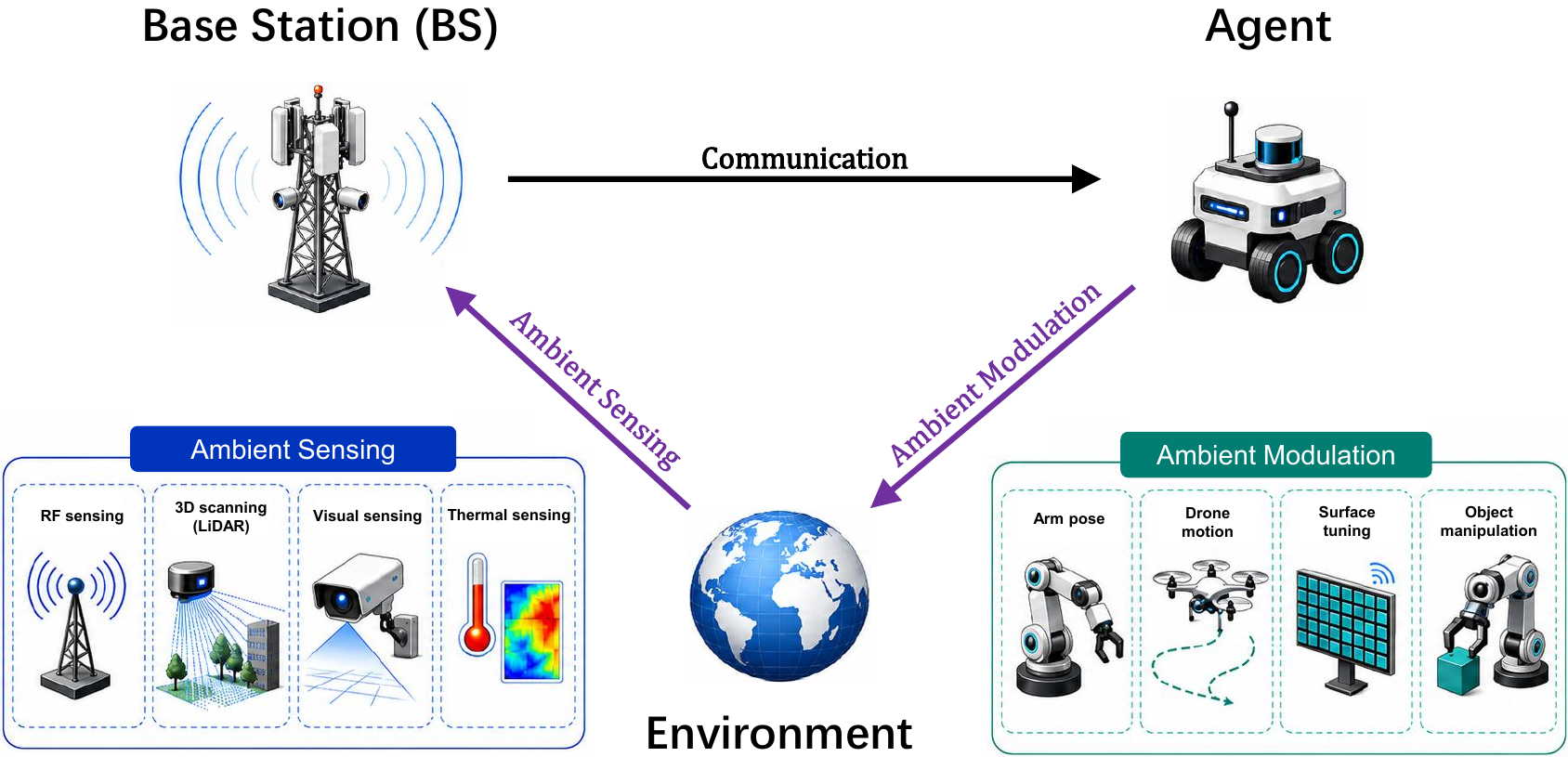}
  \caption{Conceptual illustration of embodied communication. 
Information is embodied in agent-induced environmental states and decoded through BS sensing observations, rather than transmitted by a dedicated uplink waveform.}
  \label{fig:1}
\end{figure}

\subsection{The Vision of Embodied Communication}
Consider the relationships illustrated in Fig.~\ref{fig:1}. 
There are three entities: the BS, the communication agent, and the environment. 
Two links are already familiar. 
The first is the sensing link from the environment to the BS, through which the BS observes physical states. 
The second is the conventional communication link from the BS to the agent, through which information is delivered by a designed waveform.

There is, however, a third link that is easily overlooked. 
It points from the agent to the environment. 
After all, the agent is physically situated in the environment and can deliberately alter environmental states, whether by changing posture, manipulating nearby objects, blocking or revealing a thermal source, or changing the reflective properties of surfaces. 
These actions are not merely physical side effects. 
If selected from an interpretable set of states, they can carry information.

Once this third arrow is acknowledged, a new communication pathway emerges. 
The agent's deliberate manipulation of the environment, followed by the BS's observation of that same environment, forms an information cascade from the agent to the BS through the environment. 
The agent does not generate a dedicated communication waveform. 
Rather, the agent inscribes a change onto the world, and the BS, through its sensing faculties, reads it. 
This constitutes a pathway through which information can flow. 
No additional transmitter is required, and no new spectrum is occupied. 
The pathway is already there, latent in the very architecture of a sensing-enabled network. It simply awaits intentional use.

We refer to this new form of communication as \emph{embodied communication}. Formally,

\begin{quote}
Embodied communication is a communication modality in which information is encoded into deliberately induced environmental states and decoded from the receiver's sensory observations.
\end{quote}

The term ``embodied'' emphasizes that the information is physically instantiated in the state of the environment itself. 
It is inspired by the intellectual lineage of embodied AI and embodied cognition \cite{savva2019habitat,chrisley2003embodied,shao2024theory,shapiro2019embodied}, where intelligence and meaning arise from an agent's interaction with the physical world rather than from disembodied computation alone \cite{chrisley2003embodied}. 
In a similar spirit, embodied communication treats the environment not as a passive backdrop for signal propagation, but as a medium through which expression can be enacted.

To study embodied communication, we need to identify the channel through which information flows. 
This channel is neither a conventional waveform channel nor a sensing channel alone. 
It is formed by the coupling between the agent's physical modulation of the environment and the receiver's sensing of the resulting state. 
We call this end-to-end channel the \emph{embodied channel}.
It consists of two coupled components:
\begin{itemize}[leftmargin=0.5cm]
    \item {Ambient modulation}: the agent-side operation that maps information to a controllable environmental state.
    \item {Ambient sensing}: the receiver-side operation that maps the induced environmental state to noisy sensory observations.
\end{itemize}
Together, they form an information-bearing channel through the environment.

This structure makes the embodied channel fundamentally different from a classical communication channel. 
In a conventional channel, the input alphabet consists of waveforms synthesized by a transmitter and constrained by communication resources such as power, bandwidth, and blocklength \cite{tse2005fundamentals}. 
In an embodied channel, the input alphabet consists of physically realizable environmental states. 
It is constrained jointly by what the agent can realize and what the receiver can distinguish. 
The central design question is therefore: how many physical states can be carved out of a controllable environment and used as reliable embodied symbols?

Answering this question requires a different view of sensing. 
In classical sensing, the environment is unknown and the goal is to estimate its physical parameters as accurately as possible \cite{herman2009high,kay1993statistical}. 
In embodied communication, the environmental state is selected from an admissible set to represent information. 
The goal is therefore not only to estimate a state accurately, but to determine whether different selected states can be reliably told apart. 
The relevant performance object shifts from sensing accuracy to sensing distinguishability: how separable are the receiver observations induced by different environmental states?

This shift is the key bridge from sensing to communication. 
Traditional sensing metrics, such as estimation error or Cramer-Rao-type bounds \cite{smith2005statistical}, describe local accuracy around a physical state. 
Embodied communication requires a reliability metric between states, because symbols are confused only when the receiver cannot distinguish one induced state from another. 
As shown later, classical local sensing metrics can be interpreted as local approximations of this more global distinguishability geometry.

\subsection{Contributions}

This paper makes the following advances.

\begin{itemize}[leftmargin=0.5cm]

\item We introduce embodied communication, a new communication paradigm in which information is carried by deliberately induced environmental states and recovered through receiver sensing. 
Unlike conventional wireless communication, the agent does not generate a dedicated waveform. 
Instead, information is written into the sensed world itself, enabling an information pathway that occupies no additional communication spectrum and requires no dedicated communication hardware.

\item We instantiate embodied communication in an RF sensing-enabled cellular system and reveal its defining technical problem. 
The objective of sensing is no longer merely estimating an unknown environmental parameter, but distinguishing intentionally induced environmental states that serve as communication symbols.
This shift from estimation accuracy to symbol distinguishability is the key technical departure from classical sensing.

\item We develop a communication-theoretic characterization of this distinguishability. 
By deriving the pairwise Bhattacharyya distance between the sensing distributions induced by different environmental states, we obtain a sensing-induced reliability field over the controllable physical region. 
This field is the key geometric object of the embodied channel: it determines which physical states are confusable, which states are reliable, and how the BS sensing resolution shapes the embodied alphabet.

\item We characterize the finite-snapshot $\epsilon$-capacity of the embodied channel through achievable designs and converse bounds. 
The reliability field converts codebook design into a forbidden-region packing problem, leading to lattice-based achievable constructions and a closed-form hexagonal design under a main-lobe approximation. 
We also establish information-theoretic and geometric upper bounds, and show that the number of sensing snapshots creates a fundamental tradeoff between sensing reliability and symbol duration.

\end{itemize}
\section{System Model and Embodied Channel}\label{sec:system}
We consider a sensing-enabled cellular system consisting of one BS and one communication agent, as illustrated in Fig.~\ref{fig:2}. 
The BS is endowed with RF sensing capability, motivated by the integration of sensing functions into 5.5G and 6G infrastructures. 
Although embodied communication can in principle be realized over arbitrary sensing modalities, this paper focuses on RF sensing to develop a concrete communication-theoretic model.

\subsection{RF Sensing Model at the BS}\label{sec:bs_sensing}
The BS uses a probing waveform to illuminate the environment and employs a colocated uniform planar array (UPA) to receive the scattered echo. 
As shown in Fig.~\ref{fig:2}, the UPA lies in the $yz$-plane and is centered at the origin. 
Its broadside direction is aligned with the positive $x$-axis. 
The UPA consists of $M_y \times M_z$ isotropic antenna elements with inter-element spacing $d=\lambda/2$, where $\lambda$ is the carrier wavelength. 
The total number of antenna elements is denoted by $M=M_yM_z$.

We adopt a uniform-illumination sensing model: the probing signal is assumed to illuminate the entire agent-controllable region with constant gain.
Thus, the transmit-side gain and the large-scale propagation loss are essentially constant across all admissible scatterer positions. 
These position-independent factors are absorbed into the average scattering power parameter introduced below. 
The position-dependent structure retained in the sensing observation is therefore the receive-array spatial signature, which determines the BS's ability to distinguish different induced environmental states. 
Small-scale scattering variations across sensing snapshots are captured by random complex scattering coefficients.

The communication agent is located in the far field of the BS array. 
It controls a planar physical region of size $a_y\times a_z$, where $a_y$ and $a_z$ denote the extents along the $y$- and $z$-axes, respectively. 
The region is located at distance $D$ from the BS, with its center at $(D,0,0)$, and is parallel to the $yz$-plane. 
The controllable region is therefore
\begin{equation}\label{eq:agent_plane}
\mathcal{A}
=
\left[-\frac{a_y}{2},\frac{a_y}{2}\right]
\times
\left[-\frac{a_z}{2},\frac{a_z}{2}\right].
\end{equation}
Within this controllable region, the agent deliberately places a point scatterer at a selected location, thereby creating the simplest form of a controllable spatial pattern.\footnote{More generally, the agent may manipulate multiple point scatterers or extended reflective structures, leading to richer spatial patterns.}
The resulting environmental state serves as the embodied symbol that carries the intended information. 
In other words, the agent communicates by writing a physical state into the environment, and the BS decodes the intended message by sensing that state.

Let $\mathbf{r} = (y,z) \in \mathcal{A}$ denote the position of the controllable point scatterer on the agent plane, where $(y,z)$ is the offset from the center of the plane. 
Under the far-field and small-region assumptions, this physical coordinate is approximately mapped to the angular coordinates observed at the BS as
\begin{equation}\label{eq:angle_mapping_single}
\theta \approx \frac{y}{D},
\qquad
\phi \approx \frac{z}{D}.
\end{equation} 
Throughout the paper, the scatterer may therefore be equivalently described by its physical coordinate $(y,z)$ on the agent plane or by its angular direction $(\theta,\phi)$ at the BS.

\begin{figure}[t]
  \centering
  \includegraphics[width=1\columnwidth]{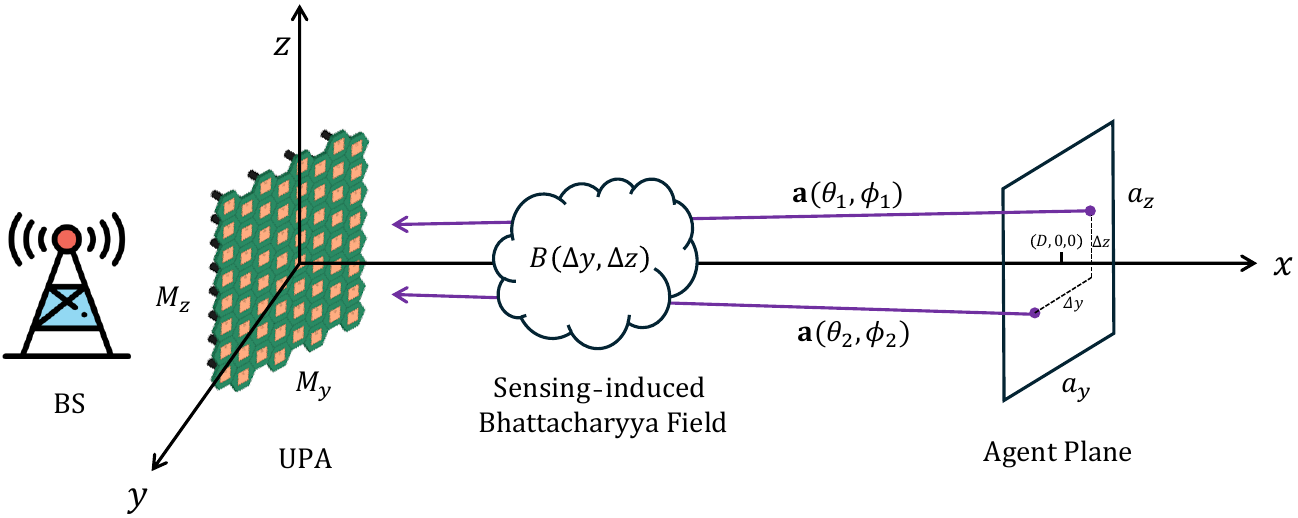}
  \caption{Embodied communication in a sensing-enabled cellular system. The agent communicates by writing a physical state into the environment, and the BS decodes the intended message by sensing the state.}
  \label{fig:2}
\end{figure}

One environmental sensing of the BS consists of $L$ snapshots. 
During the $\ell$th snapshot, where $\ell=1,2,\ldots,L$, the BS emits a narrowband probing waveform
\begin{equation}
s_{\ell}(t)=\sqrt{E_t}\bar{s}_{\ell}(t),
\end{equation}
where $E_t$ is the transmit energy per probing snapshot and $\bar{s}_{\ell}(t)$ is a unit-energy baseband pulse with duration $T_p$, e.g., a rectangular pulse $\bar{s}(t) = \frac{1}{\sqrt{T_p}}$ for $0\leq t \leq T_p$.
The snapshots use separated time resources.

The probing illumination impinges on the controllable scatterer, and the scattered echo is received by the UPA. 
After down-conversion and matched filtering with $\bar{s}_{\ell}(t)$, the BS obtains the complex baseband sensing snapshot
\begin{equation}\label{eq:snapshot_rx_single}
\mathbf{y}_{\ell}
=
h_{\ell}\mathbf{a}(\theta,\phi)
+
\mathbf{n}_{\ell},
\quad
\ell=1,\ldots,L,
\end{equation}
where $\mathbf{y}_{\ell}\in\mathbb{C}^{M}$ and $\mathbf{a}(\theta,\phi)\in\mathbb{C}^{M}$ is the normalized receive steering vector corresponding to the direction of the scatterer. 
The receiver noise satisfies $\mathbf{n}_{\ell}\sim\mathcal{CN}(\mathbf{0},\sigma^2\mathbf{I}_M)$, where $\sigma^2$ denotes the post-matched-filter noise variance. 
The noise vectors $\{\mathbf{n}_{\ell}\}_{\ell=1}^{L}$ are independent across snapshots.

The effective scattering coefficient in the $\ell$th snapshot is modeled as a circularly symmetric complex Gaussian random variable, 
$h_{\ell}\sim\mathcal{CN}(0,\rho^2)$, independently across snapshots. 
The parameter $\rho^2$ denotes the average received echo energy at the matched-filter output before receiver noise. 
It captures the transmit probing energy, the transmit illumination gain, the propagation and front-end gains, and a nominal radar cross section (RCS). 
For example, under a free-space radar model \cite{kay1993statistical}, one may write
\begin{equation}\label{eq:rho_single}
\rho
=
\sqrt{E_tG_{\rm illum}}
\frac{\lambda\sqrt{\sigma_{\mathrm{RCS}}}}{(4\pi)^{3/2}D^2},
\end{equation}
where $G_{\rm illum}$ denotes the transmit illumination gain over the agent plane, and $\sigma_{\mathrm{RCS}}$ is a nominal RCS value.

For the UPA, the antenna elements are indexed by $m_y=0,\ldots,M_y-1$ and $m_z=0,\ldots,M_z-1$. 
The normalized receive steering vector can be written as the Kronecker product of two steering vectors:
\begin{equation}\label{eq:steer}
\mathbf{a}(\theta,\phi)
=
\mathbf{a}_y(\theta,\phi)\otimes\mathbf{a}_z(\phi),
\end{equation}
where
\begin{align*}
\mathbf{a}_y(\theta,\phi)
&=
\frac{1}{\sqrt{M_y}}
\left[
1,
e^{j\pi\sin\theta\cos\phi},
\ldots,
e^{j\pi(M_y-1)\sin\theta\cos\phi}
\right]^{\mathsf{T}},
\\
\mathbf{a}_z(\phi)
&=
\frac{1}{\sqrt{M_z}}
\left[
1,
e^{j\pi\sin\phi},
\ldots,
e^{j\pi(M_z-1)\sin\phi}
\right]^{\mathsf{T}}.
\end{align*}
The normalization ensures $\|\mathbf{a}(\theta,\phi)\|=1$.

Using \eqref{eq:angle_mapping_single}, we define the position-induced steering vector
\begin{equation}\label{eq:position_steering}
\mathbf{a}(\mathbf{r})
\triangleq
\mathbf{a}\!\left(\frac{y}{D},\frac{z}{D}\right).
\end{equation}
Then, stacking all $L$ snapshots gives
\begin{equation}\label{eq:multi_snapshot_position}
\mathbf{Y}
=
\mathbf{a}(\mathbf{r})\mathbf{h}^{\mathsf{T}}
+
\mathbf{N},
\end{equation}
where
$\mathbf{Y}
=
[\mathbf{y}_1,\ldots,\mathbf{y}_L]\in\mathbb{C}^{M\times L}$,
$\mathbf{h}
=
[h_1,\ldots,h_L]^{\mathsf{T}}\in\mathbb{C}^{L}$,
and
$\mathbf{N}
=
[\mathbf{n}_1,\ldots,\mathbf{n}_L]\in\mathbb{C}^{M\times L}$.
The post-matched-filter per-snapshot sensing signal-to-noise ratio (SNR) of the controllable scatterer is defined as
\begin{equation}\label{eq:snr_single}
\gamma_0
\triangleq
\frac{\rho^2}{\sigma^2}.
\end{equation}

Since $h_{\ell}\sim\mathcal{CN}(0,\rho^2)$, the conditional distribution of one sensing snapshot given the scatterer position $\mathbf{r}$ is
\begin{equation}\label{eq:single_snapshot_gaussian_single}
\mathbf{y}_{\ell}\mid \mathbf{r}
\sim
\mathcal{CN}
\left(
\mathbf{0},
\mathbf{R}(\mathbf{r})
\right),
\end{equation}
where
\begin{equation*}
\mathbf{R}(\mathbf{r})
=
\rho^2
\mathbf{a}(\mathbf{r})
\mathbf{a}^{H}(\mathbf{r})
+
\sigma^2\mathbf{I}_M
=
\sigma^2
\left(
\mathbf{I}_M
+
\gamma_0
\mathbf{a}(\mathbf{r})
\mathbf{a}^{H}(\mathbf{r})
\right).
\end{equation*}
Thus, the scatterer position is encoded in the covariance structure of the received sensing snapshots.

In practice, the BS observes the sample covariance
\begin{equation}\label{eq:sample_covariance_single}
\widehat{\mathbf{R}}_{\mathbf{y}}
=
\frac{1}{L}
\sum_{\ell=1}^{L}
\mathbf{y}_{\ell}\mathbf{y}_{\ell}^{H}
=
\frac{1}{L}\mathbf{Y}\mathbf{Y}^{H}.
\end{equation}
The number of snapshots $L$ is the sensing resource that can be leveraged in one embodied channel use.

\subsection{Ambient Modulation}\label{sec:ambient_mod}
The agent communicates by selecting one admissible scatterer position from a predefined position codebook. 
Let
\begin{equation*}
\mathfrak{C}
=
\left\{
\mathbf{r}_1,
\mathbf{r}_2,
\ldots,
\mathbf{r}_J
\right\}
\subseteq
\mathcal{A}
\end{equation*}
denote the set of admissible positions, where $\mathbf{r}_{j}
=
(y_j,z_j)$, $j=1,\ldots,J$.
The cardinality $J=|\mathfrak{C}|$ determines the size of the embodied symbol alphabet.

Let $W\in\{1,2,\ldots,J\}$ denote the message to be conveyed in one embodied channel use.
Ambient modulation is defined as the mapping
\begin{equation}\label{eq:ambient_mod_single}
W=j
\quad\mapsto\quad
\mathbf{r}_{j}\in\mathfrak{C}.
\end{equation}

We assume ideal ambient modulation in this paper. 
That is, for any selected codeword $\mathbf{r}_{j}\in\mathfrak{C}$, the agent can realize the corresponding scatterer position without placement error, unintended scatterers, or temporal drift within the embodied channel use. 
Accordingly, the ambient modulation channel is deterministic: $\Pr(\mathbf{r}=\mathbf{r}_{j}\mid W=j)=1$.

This assumption allows us to isolate the fundamental limitation imposed by the ambient sensing process, namely the ability of the BS to distinguish different physical positions through noisy multi-snapshot RF observations. 
The impact of imperfect ambient modulation, including finite placement accuracy, actuation noise, and parasitic scattering, is left for future investigation.

\subsection{Embodied Channel Law}\label{sec:embodied_channel}

Combining the position modulation model and the multi-snapshot RF sensing model yields the embodied communication channel. 
For a transmitted message $W=j$, the agent places the scatterer at position $\mathbf{r}_j$, and the BS observes $\mathbf{Y}
=
\mathbf{a}(\mathbf{r}_j)\mathbf{h}_j^{\mathsf{T}}
+
\mathbf{N}$.
Unlike a coherent communication model, the instantaneous scattering vector $\mathbf{h}_j$ is not known to the BS and is not treated as receiver-side CSI. 
It is a random nuisance component induced by the scattering process. 
Therefore, the channel transition law is obtained by marginalizing over the random scattering coefficients.

Since the columns of $\mathbf{Y}$ are independent conditioned on $W=j$, the embodied channel transition law is
\begin{equation}\label{eq:multi_snapshot_transition_single}
p(\mathbf{Y}\mid W=j)
=
\prod_{\ell=1}^{L}
\frac{1}{\pi^{M}\det\!\left(\mathbf{R}_{j}\right)}
\exp
\left(
-
\mathbf{y}_{\ell}^{H}
\mathbf{R}_{j}^{-1}
\mathbf{y}_{\ell}
\right),
\end{equation}
where 
$\mathbf{R}_{j}
\triangleq
\mathbf{R}(\mathbf{r}_j)
=
\sigma^2
\left(
\mathbf{I}_M
+
\gamma_0
\mathbf{a}(\mathbf{r}_j)
\mathbf{a}^{H}(\mathbf{r}_j)
\right)$.

The information flow can be summarized as
\begin{equation}\label{eq:cascade_single}
W
\to
\mathbf{r}_{W}
\to
\mathbf{R}_{W}
\to
\mathbf{Y}
\to
\widehat{W}.
\end{equation}
The first mapping represents position modulation by the agent, while the second represents RF sensing by the BS. 
Together, they define a discrete-input continuous-output noncoherent channel whose input symbols are physical positions rather than conventional transmitter-generated waveforms.

\begin{rem}[Distinctive structure of the embodied channel]\label{rem:embodied_channel_distinct}
At first sight, \eqref{eq:multi_snapshot_transition_single} resembles a standard discrete-input continuous-output channel. 
The distinction lies in the origin and constraint of the input alphabet. 
In a conventional communication channel, the input symbols are waveform codewords generated by a transmitter and constrained by resources such as power, bandwidth, and blocklength. 
In the embodied channel, the input symbols are physical states of the environment, namely scatterer positions on the agent plane: $\mathfrak{C}\subseteq\mathcal{A}$.
Thus, the alphabet is not freely synthesized in signal space. 
It must be carved out of a finite physical region and can only contain positions that the BS sensing system can reliably distinguish.

Mathematically, each physical position induces the cascade
\begin{equation}
\mathbf{r}
\mapsto
\mathbf{a}(\mathbf{r})
\mapsto
\mathbf{R}(\mathbf{r})
\mapsto
p(\mathbf{Y}\mid \mathbf{r}).
\end{equation}
Choosing a codeword $\mathbf{r}_j$ therefore simultaneously chooses a physical state, an array steering vector, and a covariance matrix of the sensing observation. 
Consequently, the input alphabet and the channel law are coupled through the sensing geometry. 
The central capacity question is not merely how much information can be transmitted over a given alphabet, but how many physical states can be selected from $\mathcal{A}$ such that the induced distributions $\{p(\mathbf{Y}\mid \mathbf{r}_j)\}$ are reliably distinguishable. 
This coupling between physical alphabet design and sensing-induced statistical distinguishability is the defining mathematical feature of the embodied channel.
\end{rem}

In the next section, we define the $\epsilon$-capacity of this embodied channel and characterize how the BS sensing process constrains the number of physical positions that can be reliably distinguished from $L$ noisy snapshots.
\section{The $\epsilon$-Capacity of Embodied Channel}\label{sec:capacity}

In this section, we define the $\epsilon$-capacity of the embodied channel introduced. 
The definition follows a finite-blocklength reliability \cite{polyanskiy2010channel} viewpoint: for a fixed number of sensing snapshots $L$ and a target error probability $\epsilon$, we ask \emph{how many physical scatterer positions can be reliably used as embodied symbols}. 
This differs from the Shannon capacity of a fixed input alphabet. 
Here, the input alphabet itself is a design object, since the codewords are physical positions selected from the agent-controllable region $\mathcal{A}$.

\subsection{The $\epsilon$-Capacity}\label{subsec:epsilon_capacity_def}

Consider an embodied codebook $\mathfrak{C}=\{\mathbf{r}_1,\ldots,\mathbf{r}_J\}
\subseteq \mathcal{A}$,
where message $W=j$ is mapped to position $\mathbf{r}_j$. 
Given the observation $\mathbf{Y}$, the BS applies a decoder
\begin{equation*}
g:\mathbb{C}^{M\times L}\rightarrow \{1,\ldots,J\},
\end{equation*}
and outputs $\widehat{W}=g(\mathbf{Y})$.

For a fixed codebook $\mathfrak{C}$ and decoder $g$, we define the maximum decoding error probability as
\begin{equation*}
P_{\max}(\mathfrak{C},g)
=
\max_{j\in\{1,\ldots,J\}}
\Pr\left(g(\mathbf{Y})\neq j\mid W=j\right).
\end{equation*}
The optimal maximum error probability of the codebook is
\begin{equation}\label{eq:pmax_star_def}
P_{\max}^{\star}(\mathfrak{C})
=
\inf_{g}
P_{\max}(\mathfrak{C},g),
\end{equation}
where the infimum is over all measurable decoders.

For a target error probability $\epsilon\in(0,1)$ and a fixed number of snapshots $L$, define the maximum $\epsilon$-reliable embodied alphabet size as
\begin{eqnarray*}
&&\hspace{-0.8cm} J_{\epsilon}^{\star}(L)
=
\max_{\mathfrak{C}\subseteq\mathcal{A}}
|\mathfrak{C}| \\
&&\hspace{-0.8cm} \text{s.t.}~~ P_{\max}^{\star}(\mathfrak{C}) 
\leq
\epsilon. \label{eq:J_epsilon_constraint}
\end{eqnarray*}
The $\epsilon$-capacity of the embodied channel is then defined as
\begin{equation}\label{eq:C_epsilon_def}
C_{\epsilon}(L)
=
\frac{1}{L T_p}
\log_2
J_{\epsilon}^{\star}(L).
\end{equation}

The $\epsilon$-capacity in \eqref{eq:C_epsilon_def} is the central object of this paper. 
It measures the largest number of physical states that can be carved out of the agent plane and reliably distinguished by the BS sensing system. 
To characterize this quantity, we first study the most elementary problem: distinguishing two candidate scatterer positions.

\subsection{Pairwise Distinguishability}\label{subsec:pairwise_distinguishability}
Consider two candidate positions $\mathbf{r}_i=(y_i,z_i)$, $\mathbf{r}_j=(y_j,z_j)$ with corresponding steering vectors $\mathbf{a}_i=\mathbf{a}(\mathbf{r}_i)$ and $\mathbf{a}_j=\mathbf{a}(\mathbf{r}_j)$.
The BS must distinguish between the two hypotheses:
\begin{align*}
\mathcal{H}_i &: 
\mathbf{y}_{\ell}
\sim
\mathcal{CN}(\mathbf{0},\mathbf{R}_i),\\
\mathcal{H}_j &: 
\mathbf{y}_{\ell}
\sim
\mathcal{CN}(\mathbf{0},\mathbf{R}_j).
\end{align*}

For a single snapshot, the optimal likelihood-ratio test compares the likelihoods induced by $\mathbf{R}_i$ and $\mathbf{R}_j$. 
For $L$ independent snapshots, the log-likelihood ratio is the sum of the corresponding single-snapshot log-likelihood ratios. 
Although the exact pairwise error probability of this covariance discrimination problem generally does not admit a simple closed form, it can be tightly characterized by classical error exponents. 
In particular, we use the Bhattacharyya bound \cite{kailath1967divergence}, which gives an analytically tractable upper bound on the pairwise error probability.

For two zero-mean complex Gaussian distributions with covariance matrices $\mathbf{R}_i$ and $\mathbf{R}_j$, the single-snapshot Bhattacharyya distance is
\begin{equation}\label{eq:Bij_general}
B_{ij}
=
\log
\frac{
\det\left(\frac{\mathbf{R}_i+\mathbf{R}_j}{2}\right)
}{
\sqrt{\det(\mathbf{R}_i)\det(\mathbf{R}_j)}
}.
\end{equation}
Since the $L$ snapshots are conditionally independent, the corresponding pairwise error probability satisfies
\begin{equation}\label{eq:pairwise_bhatt_bound_general}
P_{i\rightarrow j}
\leq
\exp(-L B_{ij}).
\end{equation}

We now evaluate $B_{ij}$ for the embodied sensing model.

\begin{thm}[Pairwise embodied sensing exponent]\label{thm:pairwise_error}
For two scatterer positions $\mathbf{r}_i$ and $\mathbf{r}_j$, define the squared steering-vector correlation $\eta_{ij}
\triangleq
\left|
\mathbf{a}_i^H\mathbf{a}_j
\right|^2$.
Then the Bhattacharyya distance between the two induced sensing distributions is
\begin{equation}\label{eq:Bij_single}
B_{ij}
=
\log
\frac{
\left(1+\frac{\gamma_0}{2}\right)^2
-
\frac{\gamma_0^2}{4}\eta_{ij}
}{
1+\gamma_0
}.
\end{equation}
Consequently, the pairwise error probability from $L$ snapshots satisfies
\begin{equation}\label{eq:pairwise_error_eta}
P_{i\rightarrow j}
\leq
\left[
\frac{
1+\gamma_0
}{
\left(1+\frac{\gamma_0}{2}\right)^2
-
\frac{\gamma_0^2}{4}\eta_{ij}
}
\right]^L.
\end{equation}
\end{thm}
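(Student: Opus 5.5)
The plan is to evaluate the three determinants in \eqref{eq:Bij_general} directly, using the fact that every covariance involved is a scaled identity plus a perturbation of rank at most two.

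\textbf{Step 1: the two individual determinants.} Since $\|\mathbf{a}_i\|=1$, the matrix determinant lemma gives
\begin{equation*}
\det(\mathbf{R}_i)=\sigma^{2M}\bigl(1+\gamma_0\mathbf{a}_i^H\mathbf{a}_i\bigr)=\sigma^{2M}(1+\gamma_0),
\end{equation*}
and the same value holds for $\det(\mathbf{R}_j)$. Hence $\sqrt{\det(\mathbf{R}_i)\det(\mathbf{R}_j)}=\sigma^{2M}(1+\gamma_0)$.

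\textbf{Step 2: the averaged determinant.} Write
\begin{equation*}
\frac{\mathbf{R}_i+\mathbf{R}_j}{2}=\sigma^2\bigl(\mathbf{I}_M+\tfrac{\gamma_0}{2}\mathbf{A}\mathbf{A}^H\bigr),
\qquad \mathbf{A}=[\mathbf{a}_i,\mathbf{a}_j]\in\mathbb{C}^{M\times 2}.
\end{equation*}
Sylvester's determinant identity $\det(\mathbf{I}_M+c\mathbf{A}\mathbf{A}^H)=\det(\mathbf{I}_2+c\mathbf{A}^H\mathbf{A})$ reduces the problem to a $2\times2$ determinant. The Gram matrix is
\begin{equation*}
\mathbf{A}^H\mathbf{A}=\begin{bmatrix}1&\mathbf{a}_i^H\mathbf{a}_j\\ \mathbf{a}_j^H\mathbf{a}_i&1\end{bmatrix},
\end{equation*}
so
\begin{equation*}
\det\Bigl(\tfrac{\mathbf{R}_i+\mathbf{R}_j}{2}\Bigr)=\sigma^{2M}\Bigl[\bigl(1+\tfrac{\gamma_0}{2}\bigr)^2-\tfrac{\gamma_0^2}{4}\,|\mathbf{a}_i^H\mathbf{a}_j|^2\Bigr].
\end{equation*}
The last term is exactly $\tfrac{\gamma_0^2}{4}\eta_{ij}$. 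Taking the ratio with Step 1, the factors $\sigma^{2M}$ cancel, which yields \eqref{eq:Bij_single}.

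\textbf{Step 3: the error bound.} For completeness I would justify \eqref{eq:pairwise_bhatt_bound_general} as follows. For a binary test between equiprobable $p_i^{\otimes L}$ and $p_j^{\otimes L}$, the ML decoder's pairwise error is at most $\int\sqrt{p_i^{\otimes L}p_j^{\otimes L}}$. By conditional independence this integral factorizes as $\bigl(\int\sqrt{p_ip_j}\bigr)^L=e^{-LB_{ij}}$. The single-snapshot Gaussian integral equals the inverse of the determinant ratio in \eqref{eq:Bij_general}; this is the standard complex-Gaussian Bhattacharyya coefficient, obtained by completing the quadratic form with precision matrix $(\mathbf{R}_i^{-1}+\mathbf{R}_j^{-1})/2$. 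Substituting \eqref{eq:Bij_single} and exponentiating gives \eqref{eq:pairwise_error_eta}.

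\textbf{Main obstacle.} There is no real difficulty; the only care needed is in Step 2. One must apply Sylvester's identity with the correct scaling and use the Hermitian symmetry of the off-diagonal entries, so that their product is $|\mathbf{a}_i^H\mathbf{a}_j|^2$. As a sanity check, $B_{ij}\ge 0$ because $\eta_{ij}\le1$ by Cauchy--Schwarz, and $\eta_{ij}=1$ gives $B_{ij}=0$, as it should when the two positions are indistinguishable.
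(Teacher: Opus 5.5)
Your proposal is correct and follows the paper's proof essentially step for step. Both use the matrix determinant lemma to get $\det(\mathbf{R}_i)=\det(\mathbf{R}_j)=\sigma^{2M}(1+\gamma_0)$, then Sylvester's identity on $[\mathbf{a}_i,\mathbf{a}_j]$, then the $2\times 2$ Gram determinant. Your Step 3 also derives the Bhattacharyya bound and the complex-Gaussian Bhattacharyya coefficient, which the paper only cites, and that derivation is sound.
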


\begin{proof}
Since $\|\mathbf{a}_i\|=\|\mathbf{a}_j\|=1$, the matrix determinant lemma gives $\det(\mathbf{R}_i)=\det(\mathbf{R}_j)
=
\sigma^{2M}(1+\gamma_0)$.
Thus, the position-dependent part of \eqref{eq:Bij_general} is contained in $\det\left(\frac{\mathbf{R}_i+\mathbf{R}_j}{2}\right)$, where
\begin{align*}
\frac{\mathbf{R}_i+\mathbf{R}_j}{2}
&=
\sigma^2
\left[
\mathbf{I}_M
+
\frac{\gamma_0}{2}
\left(
\mathbf{a}_i\mathbf{a}_i^H
+
\mathbf{a}_j\mathbf{a}_j^H
\right)
\right].
\end{align*}

Defining $\mathbf{U}_{ij}
=
[\mathbf{a}_i,\mathbf{a}_j]
\in\mathbb{C}^{M\times 2}$, we have
\begin{align*}
\det\left(\frac{\mathbf{R}_i+\mathbf{R}_j}{2}\right)
&=
\sigma^{2M}
\det
\left[
\mathbf{I}_M
+
\frac{\gamma_0}{2}
\mathbf{U}_{ij}\mathbf{U}_{ij}^H
\right].
\end{align*}
By Sylvester's determinant identity,
\begin{equation*}
\det
\left[
\mathbf{I}_M
+
\frac{\gamma_0}{2}
\mathbf{U}_{ij}\mathbf{U}_{ij}^H
\right]
=
\det
\left[
\mathbf{I}_2
+
\frac{\gamma_0}{2}
\mathbf{U}_{ij}^H\mathbf{U}_{ij}
\right].
\end{equation*}
Moreover,
\begin{equation*}
\mathbf{U}_{ij}^H\mathbf{U}_{ij}
=
\begin{bmatrix}
1 & \mathbf{a}_i^H\mathbf{a}_j\\
\mathbf{a}_j^H\mathbf{a}_i & 1
\end{bmatrix}.
\end{equation*}
Hence,
\begin{align*}
\det
\left[
\mathbf{I}_2
+
\frac{\gamma_0}{2}
\mathbf{U}_{ij}^H\mathbf{U}_{ij}
\right]
&=
\det
\begin{bmatrix}
1+\frac{\gamma_0}{2}
&
\frac{\gamma_0}{2}\mathbf{a}_i^H\mathbf{a}_j\\
\frac{\gamma_0}{2}\mathbf{a}_j^H\mathbf{a}_i
&
1+\frac{\gamma_0}{2}
\end{bmatrix}\\
&=
\left(1+\frac{\gamma_0}{2}\right)^2
-
\frac{\gamma_0^2}{4}\eta_{ij}.
\end{align*}
Therefore,
\begin{equation*}
\det\left(\frac{\mathbf{R}_i+\mathbf{R}_j}{2}\right)
=
\sigma^{2M}
\left[
\left(1+\frac{\gamma_0}{2}\right)^2
-
\frac{\gamma_0^2}{4}\eta_{ij}
\right].
\end{equation*}
Substituting the above determinant expressions into \eqref{eq:Bij_general} yields \eqref{eq:Bij_single}. 
The pairwise error bound \eqref{eq:pairwise_error_eta} follows from \eqref{eq:pairwise_bhatt_bound_general}.
\end{proof}

Theorem~\ref{thm:pairwise_error} shows that the two positions enter the pairwise error exponent only through the squared steering-vector correlation $\eta_{ij}$.
A larger $\eta_{ij}$ means that the two positions induce more similar BS array responses, which reduces $B_{ij}$ and increases the pairwise error probability. 
A smaller $\eta_{ij}$ means that the two positions are more separated in the BS sensing space, which increases $B_{ij}$ and improves discrimination reliability.

\begin{rem}[From sensing accuracy to sensing distinguishability]\label{rem:sensing_distinguishability}
The role of sensing in embodied communication is different from that in conventional localization. 
Classical sensing typically asks how accurately an unknown scatterer position can be estimated. 
Here, the scatterer position is intentionally selected from a finite physical codebook and therefore acts as a communication symbol. 
The relevant question is thus how reliably two physical states can be distinguished.

This is why we characterize the BS sensing process through the binary hypothesis testing distance $B_{ij}$ rather than through an estimation-accuracy metric. 
The resulting pairwise distinguishability directly controls the probability of confusing one embodied symbol with another, and later becomes the geometry used to design reliable position codebooks.
\end{rem}

\subsection{Sensing-Induced Reliability Field}\label{subsec:reliability_field}

Theorem~\ref{thm:pairwise_error} expresses pairwise reliability in terms of $\eta_{ij}$. 
We now connect $\eta_{ij}$ to physical displacement on the agent plane.

Let $\boldsymbol{\Delta}_{ij}=(\Delta y_{ij},\Delta z_{ij})\triangleq\mathbf{r}_i-\mathbf{r}_j$.
Under the far-field and small-region approximation in \eqref{eq:angle_mapping_single}, the steering-vector correlation depends only on $\boldsymbol{\Delta}_{ij}$.

\begin{lem}[Steering correlation induced by physical displacement]\label{lem:steering_correlation}
Under the UPA model, the squared correlation between two steering vectors associated with positions separated by $\boldsymbol{\Delta}=(\Delta y,\Delta z)$ is
\begin{equation}\label{eq:eta_product}
\eta(\boldsymbol{\Delta})
=
\eta_y(\Delta y)\eta_z(\Delta z),
\end{equation}
where
\begin{equation*}
\eta_y(\Delta y)
\!\!=\!\!
\frac{1}{M_y^2}\!
\left|
\frac{
\sin\left(\frac{\pi M_y \Delta y}{2D}\right)
}{
\sin\left(\frac{\pi \Delta y}{2D}\right)
}
\right|^2  \!\!\!\!,
\eta_z(\Delta z)
\!\!=\!\!
\frac{1}{M_z^2}\!
\left|
\frac{
\sin\left(\frac{\pi M_z \Delta z}{2D}\right)
}{
\sin\left(\frac{\pi \Delta z}{2D}\right)
}
\right|^2.
\end{equation*}
\end{lem}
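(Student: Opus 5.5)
We would prove the lemma by exploiting the Kronecker structure in \eqref{eq:steer} together with the small-angle approximation in \eqref{eq:angle_mapping_single}. The mixed-product property of the Kronecker product gives
\begin{equation*}
\mathbf{a}_i^H\mathbf{a}_j
=
\left(\mathbf{a}_y(\theta_i,\phi_i)\otimes\mathbf{a}_z(\phi_i)\right)^H
\left(\mathbf{a}_y(\theta_j,\phi_j)\otimes\mathbf{a}_z(\phi_j)\right)
=
\left(\mathbf{a}_{y,i}^H\mathbf{a}_{y,j}\right)\left(\mathbf{a}_{z,i}^H\mathbf{a}_{z,j}\right).
\end{equation*}
Squaring the modulus therefore factors $\eta_{ij}$ into a $y$-part times a $z$-part, which is the product form \eqref{eq:eta_product}. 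It then remains to evaluate each one-dimensional inner product.

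For the $z$-factor, each term is a geometric sum,
\begin{equation*}
\mathbf{a}_{z,i}^H\mathbf{a}_{z,j}=\frac{1}{M_z}\sum_{m=0}^{M_z-1}e^{j\pi m(\sin\phi_j-\sin\phi_i)}.
\end{equation*}
Summing it gives the Dirichlet-kernel expression
\begin{equation*}
\left|\mathbf{a}_{z,i}^H\mathbf{a}_{z,j}\right|^2=\frac{1}{M_z^2}\left|\frac{\sin(\pi M_z\psi/2)}{\sin(\pi\psi/2)}\right|^2,
\qquad \psi=\sin\phi_j-\sin\phi_i .
\end{equation*}
The linear phase factor drops out under the modulus. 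Under the far-field small-region regime we use $\sin\phi\approx\phi\approx z/D$, so $\psi\approx\Delta z/D$ up to a sign, and the sign is irrelevant because the expression is even. This yields $\eta_z(\Delta z)$.

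The $y$-factor is handled identically, with spatial frequency $\sin\theta\cos\phi$. Under the same approximation $\cos\phi\approx 1$ and $\sin\theta\approx y/D$, so the phase difference is $\approx\Delta y/D$ and we obtain $\eta_y(\Delta y)$. The degenerate case $\Delta=0$ is covered by the limit value $1$.

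The main obstacle is the approximation step, not the algebra. Exactly, the $y$-phase $\sin\theta\cos\phi$ couples $\theta$ and $\phi$, so the factorization into functions of $\Delta y$ and $\Delta z$ alone only holds to leading order. We would state this explicitly: we linearize $\sin(y/D)\cos(z/D)=y/D+O((|y|+|z|)^3/D^3)$ and $\sin(z/D)=z/D+O(|z|^3/D^3)$ over $\mathcal{A}$. This shows that the residual phase errors, multiplied by at most $\pi M_y$ or $\pi M_z$, vanish when $M\max(a_y,a_z)^3/D^3\to0$. This is the regime implicitly assumed in \eqref{eq:angle_mapping_single}, and in it the equality in \eqref{eq:eta_product} is interpreted in that approximate sense.
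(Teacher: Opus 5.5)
Your proposal is correct and follows essentially the same route as the paper: the Kronecker mixed-product factorization, the small-angle substitution $\sin\theta\cos\phi\approx y/D$ and $\sin\phi\approx z/D$, and the geometric-sum evaluation of each one-dimensional inner product as a Dirichlet kernel. Your explicit cubic bound on the linearization error is an addition; the paper only invokes the small-angle approximation without quantifying it, so your remark that the product form in $(\Delta y,\Delta z)$ holds only approximately is accurate.
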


\begin{proof}
From the Kronecker structure in \eqref{eq:steer}, we can write $\mathbf{a}(\mathbf{r})
= \mathbf{a}_y(\mathbf{r})\otimes \mathbf{a}_z(\mathbf{r})$.
Therefore, for two positions $\mathbf{r}_i$ and $\mathbf{r}_j$, we have
\begin{align*}
\left|
\mathbf{a}_i^H\mathbf{a}_j
\right|^2
&=
\left|
\left(
\mathbf{a}_{y,i}^H\otimes \mathbf{a}_{z,i}^H
\right)
\left(
\mathbf{a}_{y,j}\otimes \mathbf{a}_{z,j}
\right)
\right|^2\\
&=
\left|
\mathbf{a}_{y,i}^H\mathbf{a}_{y,j}
\right|^2
\left|
\mathbf{a}_{z,i}^H\mathbf{a}_{z,j}
\right|^2.
\end{align*}
Using the small-angle approximation, we have $\sin\theta\cos\phi\simeq \frac{y}{D}$ and $\sin\phi\simeq \frac{z}{D}$. The $y$-axis inner product can be written as
\begin{align*}
\mathbf{a}_{y,i}^H\mathbf{a}_{y,j}
&=
\frac{1}{M_y}
\sum_{m=0}^{M_y-1}
\exp\left(
j\pi m\frac{y_j-y_i}{D}
\right).
\end{align*}
Taking the magnitude squared and using the finite geometric-sum identity gives
\begin{equation*}
\left|
\mathbf{a}_{y,i}^H\mathbf{a}_{y,j}
\right|^2
=
\frac{1}{M_y^2}
\left|
\frac{
\sin\left(\frac{\pi M_y (y_i-y_j)}{2D}\right)
}{
\sin\left(\frac{\pi (y_i-y_j)}{2D}\right)
}
\right|^2.
\end{equation*}
The derivation of $\eta_z(\Delta z)$ follows identically from
\begin{equation*}
\mathbf{a}_{z,i}^H\mathbf{a}_{z,j}
=
\frac{1}{M_z}
\sum_{n=0}^{M_z-1}
\exp\left(
j\pi n\frac{z_j-z_i}{D}
\right).
\end{equation*}
Multiplying the two squared magnitudes yields \eqref{eq:eta_product}.
\end{proof}

Combining Theorem~\ref{thm:pairwise_error} and Lemma~\ref{lem:steering_correlation}, we can define the sensing-induced Bhattacharyya field:
\begin{equation}\label{eq:B_delta_def}
B(\boldsymbol{\Delta})
=
\log
\frac{
\left(1+\frac{\gamma_0}{2}\right)^2
-
\frac{\gamma_0^2}{4}
\eta(\boldsymbol{\Delta})
}{
1+\gamma_0
}.
\end{equation}
Then, the pairwise error probability satisfies
\begin{equation}\label{eq:pairwise_error_Bdelta}
P_{i\rightarrow j}
\leq
\exp\left[-L B(\boldsymbol{\Delta}_{ij})\right].
\end{equation}

The field $B(\boldsymbol{\Delta})$ is the fundamental reliability object induced by the BS sensing system on the agent plane. 
It maps a physical displacement between two possible embodied symbols to a sensing error exponent. 
It is generally anisotropic and not solely a function of the Euclidean distance $\|\boldsymbol{\Delta}\|_2$.

For a target pairwise error probability $\epsilon_{\rm p}$, define the required single-snapshot exponent
\begin{equation}\label{eq:B_required_pairwise}
B_{\rm req}(\epsilon_{\rm p},L)
=
\frac{1}{L}
\log\frac{1}{\epsilon_{\rm p}}.
\end{equation}
The pairwise confusability region is
\begin{equation*}
\mathcal{S}_{\rm conf}(\epsilon_{\rm p},L)
=
\left\{
\boldsymbol{\Delta}\in\mathbb{R}^2:
B(\boldsymbol{\Delta})
<
B_{\rm req}(\epsilon_{\rm p},L)
\right\}.
\end{equation*}
If the displacement between two codewords lies outside $\mathcal{S}_{\rm conf}(\epsilon_{\rm p},L)$, then the Bhattacharyya bound guarantees pairwise error probability no larger than $\epsilon_{\rm p}$.

\begin{rem}[Relation to classical sensing metrics]\label{rem:crb_relation}
Classical studies often characterize localization performance through estimation-theoretic metrics such as the Fisher information matrix (FIM) and the Cramer-Rao bound (CRB). 

The Bhattacharyya field $B(\boldsymbol{\Delta})$ provides a pairwise error exponent between two candidate embodied symbols and leads directly to codebook-level reliability conditions. 
In contrast, the CRB provides a local lower bound on estimation variance and does not directly yield a maximum-error guarantee for a finite position codebook.

The two viewpoints are nevertheless connected. 
Let $\boldsymbol{\theta}=[y,z]^T$ denote the scatterer position, and let $\mathbf{J}(\boldsymbol{\theta})$ be the single-snapshot FIM associated with the Gaussian model $\mathcal{CN}(\mathbf{0},\mathbf{R}(\boldsymbol{\theta}))$. 
For two nearby positions $\boldsymbol{\theta}$ and $\boldsymbol{\theta}+\boldsymbol{\delta}$, the Bhattacharyya distance admits the local expansion
\begin{equation}
B(\boldsymbol{\theta},\boldsymbol{\theta}+\boldsymbol{\delta})
=
\frac{1}{8}
\boldsymbol{\delta}^{T}
\mathbf{J}(\boldsymbol{\theta})
\boldsymbol{\delta}
+
o(\|\boldsymbol{\delta}\|^2).
\end{equation}
Thus, the FIM underlying the CRB describes the local curvature of the pairwise distinguishability field. 
However, embodied communication requires a global codebook design over finite separations, for which the full function $B(\boldsymbol{\Delta})$ is needed. 
In particular, $B(\boldsymbol{\Delta})$ captures anisotropy, sidelobes, and nonlocal ambiguities of the BS array response that are not represented by a purely local CRB.
\end{rem}

\subsection{Reliability of a Codebook}\label{subsec:codebook_reliability}

We now lift the pairwise reliability law to a codebook-level guarantee under maximum error.

For a codebook $\mathfrak{C}$, the maximum-likelihood decoder is
\begin{equation}\label{eq:ML_decoder_general}
\widehat{W}
=
\arg\max_{j\in\{1,\ldots,J\}}
p(\mathbf{Y}|W=j).
\end{equation}
Using \eqref{eq:multi_snapshot_transition_single}, the negative log-likelihood under hypothesis $W=j$ is, up to constants independent of $j$, $L\log\det(\mathbf{R}_j)
+
\operatorname{tr}
\left(
\mathbf{R}_j^{-1}\mathbf{Y}\mathbf{Y}^H
\right)$.
Therefore,
\begin{equation*}\label{eq:ML_decoder_covariance}
\widehat{W}
=
\arg\min_j
\left[
L\log\det(\mathbf{R}_j)
+
\operatorname{tr}
\left(
\mathbf{R}_j^{-1}\mathbf{Y}\mathbf{Y}^H
\right)
\right],
\end{equation*}
where $\det(\mathbf{R}_j)
=
\sigma^{2M}(1+\gamma_0)$, $\forall j$, which is common to all hypotheses.
Substituting
\begin{equation*}
\mathbf{R}_j^{-1}
=
\frac{1}{\sigma^2}
\left(
\mathbf{I}_M
-
\frac{\gamma_0}{1+\gamma_0}
\mathbf{a}_j\mathbf{a}_j^H
\right)
\end{equation*}
and removing terms independent of $j$, we obtain
\begin{equation}\label{eq:ML_decoder_energy}
\widehat{W}
=
\arg\max_j
\mathbf{a}_j^H
\mathbf{Y}\mathbf{Y}^{H}
\mathbf{a}_j.
\end{equation}
Thus, in the embodied channel, the optimal decoder selects the candidate position whose steering vector captures the largest accumulated sensing energy over the $L$ snapshots.

The reliability of the entire codebook is controlled by the weakest pair of codewords. 
Define the minimum pairwise Bhattacharyya distance of $\mathfrak{C}$ as
\begin{equation*}\label{eq:Bmin_def}
B_{\min}(\mathfrak{C})
=
\min_{i\neq j}
B(\mathbf{r}_i-\mathbf{r}_j).
\end{equation*}

\begin{thm}[Sufficient condition for $\epsilon$-reliability]\label{thm:codebook_error}
For a codebook $\mathfrak{C}$ with $J$ codewords, the maximum error probability under ML decoding satisfies
\begin{equation}\label{eq:Pmax_union}
P_{\max}(\mathfrak{C},g_{\rm ML})
\leq
(J-1)
\exp\left[
-LB_{\min}(\mathfrak{C})
\right].
\end{equation}
Consequently, a sufficient condition for $\epsilon$-reliability is
\begin{equation}\label{eq:Bmin_condition}
B_{\min}(\mathfrak{C})
\geq
\frac{1}{L}
\log
\frac{J-1}{\epsilon}.
\end{equation}
\end{thm}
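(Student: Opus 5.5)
The plan is a union bound over pairwise ML error events, combined with the pairwise Bhattacharyya bound of Theorem~\ref{thm:pairwise_error} in its displacement form \eqref{eq:pairwise_error_Bdelta}. Fix a transmitted message $W=j$.

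\textbf{Step 1: reduce the ML error event to pairwise events.} Under the decoder \eqref{eq:ML_decoder_energy}, an error occurs only if some $k\neq j$ satisfies $p(\mathbf{Y}\mid W=k)\geq p(\mathbf{Y}\mid W=j)$. Ties can be assigned to errors; this only makes the bound conservative. Hence
\begin{equation*}
\{g_{\rm ML}(\mathbf{Y})\neq j\}\subseteq\bigcup_{k\neq j}\{p(\mathbf{Y}\mid W=k)\geq p(\mathbf{Y}\mid W=j)\},
\end{equation*}
and the union bound gives
\begin{equation*}
\Pr(g_{\rm ML}(\mathbf{Y})\neq j\mid W=j)\leq\sum_{k\neq j}P_{j\to k},
\end{equation*}
where $P_{j\to k}$ is the pairwise error probability of the binary likelihood-ratio test between $\mathcal{H}_j$ and $\mathcal{H}_k$.

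\textbf{Step 2: bound each pairwise term.} The bound $P_{j\to k}\leq \exp(-LB_{jk})$ must be justified for exactly this pairwise event, not only quoted. I would derive it directly via the Chernoff trick with $s=1/2$:
\begin{equation*}
P_{j\to k}\leq \mathbb{E}_j\!\left[\sqrt{\tfrac{p(\mathbf{Y}\mid k)}{p(\mathbf{Y}\mid j)}}\right]
=\int\sqrt{p(\mathbf{Y}\mid k)\,p(\mathbf{Y}\mid j)}\,d\mathbf{Y}.
\end{equation*}
Because the snapshots are conditionally independent, this integral factorizes into the $L$-th power of the single-snapshot Bhattacharyya coefficient. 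For zero-mean complex Gaussians that coefficient equals $\exp(-B_{jk})$ with $B_{jk}$ as in \eqref{eq:Bij_general}. This yields \eqref{eq:pairwise_bhatt_bound_general}, and then by Lemma~\ref{lem:steering_correlation} and \eqref{eq:B_delta_def} it becomes \eqref{eq:pairwise_error_Bdelta}.

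\textbf{Step 3: pass to the minimum distance.} Since $B(\mathbf{r}_j-\mathbf{r}_k)\geq B_{\min}(\mathfrak{C})$ for every $k\neq j$, each of the $J-1$ terms is at most $\exp[-LB_{\min}(\mathfrak{C})]$. The resulting bound does not depend on $j$, so taking the maximum over $j$ gives \eqref{eq:Pmax_union}.

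\textbf{Step 4: the sufficient condition.} Require $(J-1)\exp(-LB_{\min})\leq\epsilon$ and take logarithms; this is exactly \eqref{eq:Bmin_condition}. Also note that $P^\star_{\max}(\mathfrak{C})\leq P_{\max}(\mathfrak{C},g_{\rm ML})$, so the constraint in the definition of $J^\star_\epsilon(L)$ is satisfied.

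\textbf{Main obstacle.} The only delicate point is Step 2. One must confirm that the Bhattacharyya bound applies to the ML pairwise event under the actual hypothesis, for an unequal-prior-free maximum-error criterion. The Chernoff argument above settles this without any priors. One should also check that the symmetry $B_{jk}=B_{kj}$ (equal determinants, and $\eta$ symmetric) makes the bound valid regardless of which direction of error is being bounded.
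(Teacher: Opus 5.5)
Your proposal is correct and follows the same route as the paper, which proves the bound via the union bound over the $J-1$ competing codewords together with the pairwise bound \eqref{eq:pairwise_error_Bdelta}. Your Chernoff derivation with $s=1/2$ spells out the justification of the pairwise Bhattacharyya bound for the ML pairwise event, which the paper simply takes from the cited literature.
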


\begin{proof}
The proof follows the union bound over the $J-1$ competing codewords, together with the pairwise bound in \eqref{eq:pairwise_error_Bdelta}, and is omitted for brevity.
\end{proof}

Theorem~\ref{thm:codebook_error} gives a direct codebook design rule. 
A $J$-position codebook is guaranteed to be $\epsilon$-reliable if every pairwise displacement lies outside the confusability region associated with the pairwise budget $\epsilon/(J-1)$. 
This also indicates that a larger codebook requires a larger minimum pairwise error exponent to maintain the same overall reliability.

\section{Achievable Bounds and Codebook Designs}\label{sec:achievable}

Section~\ref{sec:capacity} defines the $\epsilon$-capacity $C_{\epsilon}(L)$ and establishes a sufficient reliability condition for a codebook. 
Specifically, a $J$-position codebook $\mathfrak{C}\subseteq\mathcal{A}$ is guaranteed to be $\epsilon$-reliable if \eqref{eq:Bmin_condition} is satisfied.
In this section, we develop achievable lower bounds on $C_{\epsilon}(L)$ by constructing codebooks that satisfy this condition.

\subsection{Forbidden-Region Packing Lower Bound}\label{subsec:forbidden_region_packing}

The key observation is that the reliability condition defines a forbidden region in displacement space. 
Two physical positions can be used as reliable codewords only if their displacement does not fall into this region. 
Therefore, embodied codebook design becomes a packing problem on the agent plane under a sensing-induced forbidden displacement geometry.

For a candidate codebook size $J$, define the codebook-level Bhattacharyya threshold
\begin{equation*}
B_J
\triangleq
\frac{1}{L}
\log\frac{J-1}{\epsilon}.
\end{equation*}
The corresponding forbidden displacement region is
\begin{equation}\label{eq:FJ_def}
\mathcal{F}_J
=
\left\{
\boldsymbol{\Delta}\in\mathbb{R}^2:
B(\boldsymbol{\Delta})<B_J
\right\}.
\end{equation}
Therefore, a sufficient condition for $\epsilon$-reliability is $\mathbf{r}_i-\mathbf{r}_j
\notin
\mathcal{F}_J$, $\forall i\neq j$.
This leads to the following alphabet size:
\begin{equation}\label{eq:JF_star_def}
J_{\rm F}^{\star}(L,\!\epsilon)
\!=\!
\!\max\!
\left\{\!
J \!:\!
\exists\,
\mathfrak{C}\!\subseteq\!\mathcal{A},
|\mathfrak{C}|\!=\!J,\!
\mathbf{r}_i\!-\!\mathbf{r}_j
\!\notin\!
\mathcal{F}_J,\!
\forall i\!\neq\! j
\right\}.
\end{equation}
Every codebook counted by \eqref{eq:JF_star_def} satisfies the sufficient reliability condition in Theorem~\ref{thm:codebook_error}. 
Hence, the forbidden-region packing rate
\begin{equation}\label{eq:RF_def}
R_{\rm F}(L,\epsilon)
=
\frac{1}{L T_p}
\log_2
J_{\rm F}^{\star}(L,\epsilon)
\end{equation}
is achievable, and $C_{\epsilon}(L)\geq R_{\rm F}(L,\epsilon)$.

The optimization in \eqref{eq:JF_star_def}, however, is difficult because it allows arbitrary finite point sets in a continuous region. 
We next introduce lattice codebooks as a structured and analytically tractable class of embodied codebooks.

\subsubsection{Lattice codebooks}\label{subsec:lattice_codebooks}

A natural structured construction is to select codewords from a two-dimensional lattice on the agent plane. 
Let
\begin{equation*}
\Lambda(\mathbf{G})
=
\left\{
\mathbf{G}\mathbf{k}:
\mathbf{k}\in\mathbb{Z}^2
\right\},
\end{equation*}
where $\mathbf{G}=[\mathbf{g}_1,\mathbf{g}_2]\in\mathbb{R}^{2\times 2}$ is a full-rank lattice generator \cite{zamir2014lattice}. 
The corresponding lattice codebook is
\begin{equation*}\label{eq:lattice_codebook_def}
\mathfrak{C}_{\Lambda}(\mathbf{G})
=
\Lambda(\mathbf{G})\cap\mathcal{A},
\end{equation*}
and we denote by $J_{\Lambda}(\mathbf{G})\triangleq |\mathfrak{C}_{\Lambda}(\mathbf{G})|$ the number of lattice points retained inside the agent plane.

\begin{prop}[Lattice-codebook achievable rate]\label{prop:lattice_rate}
For any full-rank lattice generator $\mathbf{G}$, if
\begin{equation}\label{eq:lattice_avoid_condition}
\mathbf{G}\mathbf{k}
\notin
\mathcal{F}_{J_{\Lambda}(\mathbf{G})},
\quad
\forall \mathbf{k}\in\mathbb{Z}^2\setminus\{\mathbf{0}\},
\end{equation}
the lattice codebook $\mathfrak{C}_{\Lambda}(\mathbf{G})$ is $\epsilon$-reliable under ML decoding. 
The corresponding achievable embodied rate is
\begin{equation}\label{eq:Rlattice_def}
R_{\Lambda}(\mathbf{G})
=
\frac{1}{L T_p}
\log_2
J_{\Lambda}(\mathbf{G}).
\end{equation}
\end{prop}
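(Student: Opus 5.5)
The plan is to reduce Proposition~\ref{prop:lattice_rate} to the forbidden-region sufficient condition and then invoke Theorem~\ref{thm:codebook_error}. There are three steps.

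First, I will check that every pairwise difference of codewords is a nonzero lattice vector. Write $J=J_{\Lambda}(\mathbf{G})$ and take two distinct codewords $\mathbf{r}_i,\mathbf{r}_j\in\mathfrak{C}_{\Lambda}(\mathbf{G})=\Lambda(\mathbf{G})\cap\mathcal{A}$. Then $\mathbf{r}_i=\mathbf{G}\mathbf{k}_i$ and $\mathbf{r}_j=\mathbf{G}\mathbf{k}_j$ for some integer vectors $\mathbf{k}_i,\mathbf{k}_j\in\mathbb{Z}^2$. Their difference is $\mathbf{r}_i-\mathbf{r}_j=\mathbf{G}(\mathbf{k}_i-\mathbf{k}_j)$. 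Because $\mathbf{G}$ is full rank and $\mathbf{r}_i\neq\mathbf{r}_j$, the integer vector $\mathbf{k}=\mathbf{k}_i-\mathbf{k}_j$ lies in $\mathbb{Z}^2\setminus\{\mathbf{0}\}$. Hypothesis \eqref{eq:lattice_avoid_condition} then gives $\mathbf{r}_i-\mathbf{r}_j\notin\mathcal{F}_{J}$ for every pair $i\neq j$.

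Second, I will convert this into the condition of Theorem~\ref{thm:codebook_error}. By the definition \eqref{eq:FJ_def}, $\mathbf{r}_i-\mathbf{r}_j\notin\mathcal{F}_J$ means $B(\mathbf{r}_i-\mathbf{r}_j)\ge B_J$. Taking the minimum over all pairs gives
\[
B_{\min}(\mathfrak{C}_{\Lambda}(\mathbf{G}))\ge B_J=\frac{1}{L}\log\frac{J-1}{\epsilon},
\]
which is exactly \eqref{eq:Bmin_condition}. Theorem~\ref{thm:codebook_error} then gives $P_{\max}(\mathfrak{C}_{\Lambda},g_{\rm ML})\le (J-1)e^{-LB_{\min}}\le\epsilon$. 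It follows that $P^\star_{\max}\le\epsilon$ as well, so $J_\epsilon^\star(L)\ge J$. By \eqref{eq:C_epsilon_def}, $R_{\Lambda}(\mathbf{G})\le C_\epsilon(L)$, which is the claimed achievability. Equivalently, $\mathfrak{C}_\Lambda(\mathbf{G})$ is one of the codebooks counted in \eqref{eq:JF_star_def}.

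Third, I will dispose of the degenerate cases. If $J\le 1$, the codebook has no competing codewords and is trivially reliable; the rate is then $0$, so nothing needs proving. Since $\mathcal{A}$ is compact, $J$ is finite, so $B_J$ is well defined.

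I expect the main obstacle to be a subtle one rather than a computational one: $\mathcal{F}_J$ depends on $J$, and $J$ itself depends on $\mathbf{G}$. The argument must therefore use the forbidden region evaluated at the actual number of retained points, as the hypothesis does, and must not first pick $\mathbf{G}$ for a fixed target $J$. With that bookkeeping made explicit, the remaining steps are direct substitutions.
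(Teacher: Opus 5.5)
Your proposal is correct and follows the same route as the paper. Every difference of distinct codewords is a nonzero vector of $\Lambda(\mathbf{G})$, so the hypothesis places it outside $\mathcal{F}_{J_{\Lambda}(\mathbf{G})}$, which gives $B(\mathbf{r}_i-\mathbf{r}_j)\ge \frac{1}{L}\log\frac{J_{\Lambda}(\mathbf{G})-1}{\epsilon}$ for all $i\neq j$ and hence $P_{\max}\le\epsilon$ by Theorem~\ref{thm:codebook_error}. Your remarks on the finiteness of $J_{\Lambda}(\mathbf{G})$ and the trivial case $J_{\Lambda}(\mathbf{G})=1$ are harmless additions that the paper leaves implicit.
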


\begin{proof}
For a lattice codebook, all pairwise displacements belong to the difference lattice: $\mathbf{r}_i-\mathbf{r}_j
\in
\Lambda(\mathbf{G})$, $i\neq j$.

If \eqref{eq:lattice_avoid_condition} holds, then for all distinct codewords $\mathbf{r}_i,\mathbf{r}_j\in\mathfrak{C}_{\Lambda}(\mathbf{G})$, $\mathbf{r}_i-\mathbf{r}_j
\notin
\mathcal{F}_{J_{\Lambda}(\mathbf{G})}$.
By the definition of $\mathcal{F}_{J}$ in \eqref{eq:FJ_def}, this implies
\begin{equation}
B(\mathbf{r}_i-\mathbf{r}_j)
\geq
\frac{1}{L}
\log
\frac{
J_{\Lambda}(\mathbf{G})-1
}{
\epsilon
},
\qquad
\forall i\neq j.
\end{equation}
Therefore, Theorem~\ref{thm:codebook_error} gives $P_{\max}(\mathfrak{C}_{\Lambda}(\mathbf{G}),g_{\rm ML})
\leq
\epsilon$.
The lattice codebook is feasible for the $\epsilon$-capacity problem and achieves \eqref{eq:Rlattice_def}.
\end{proof}

The best lattice achievable rate is
\begin{equation*}
R_{\Lambda}^{\star}(L,\epsilon)
=
\max_{\mathbf{G}\in\mathbb{R}^{2\times 2}}
\frac{1}{L T_p}
\log_2
J_{\Lambda}(\mathbf{G}),~~\text{s.t.}~~\eqref{eq:lattice_avoid_condition}.
\end{equation*}
Then, $R_{\Lambda}^{\star}(L,\epsilon)\leq C_{\epsilon}(L)$ is a lower bound.

\begin{rem}
The lattice formulation contains rectangular grids as a special case by choosing $\mathbf{G}= [[s_y, 0]^\top,[0, s_z]^\top]$.
However, a general lattice is more flexible. 
It can rotate, shear, and adapt its fundamental cell to the anisotropic shape of the forbidden region. 
This additional geometric freedom is essential when $B(\boldsymbol{\Delta})$ is not isotropic.
\end{rem}

For large agent planes, boundary effects are small and the number of retained lattice points is approximately
\begin{equation}\label{eq:lattice_size_approx}
J_{\Lambda}(\mathbf{G})
\approx
\frac{a_y a_z}{|\det\mathbf{G}|}.
\end{equation}
Thus, lattice codebook design can be interpreted as minimizing the fundamental cell area $|\det\mathbf{G}|$ while ensuring that no nonzero lattice point falls into the forbidden region. 
The next subsection shows that, under a main-lobe approximation, this lattice packing problem has a closed-form geometric solution.

\subsubsection{Main-lobe closed-form design}\label{subsec:main_lobe_design}

The exact forbidden region $\mathcal{F}_J$ is induced by the UPA Dirichlet kernels in \eqref{eq:eta_product}. 
It can be anisotropic and may contain sidelobe-induced nonconvex structure. 
However, in many practical regimes, the dominant part of the forbidden region is governed by the main lobe around the origin. 
We now use a second-order approximation to obtain a closed-form lattice design.

\begin{lem}[Quadratic approximation of the reliability field]\label{lem:quadratic_B}
In the main-lobe regime, the sensing-induced Bhattacharyya field admits the approximation
\begin{equation}\label{eq:B_quadratic}
B(\Delta y,\Delta z)
\approx
\kappa
\left(
\alpha_y\Delta y^2+\alpha_z\Delta z^2
\right),
\end{equation}
where $\kappa
=
\frac{\gamma_0^2}{4(1+\gamma_0)}$, and
\begin{equation*}
\alpha_y
=
\frac{\pi^2(M_y^2-1)}{12D^2},
\quad
\alpha_z
=
\frac{\pi^2(M_z^2-1)}{12D^2}.
\end{equation*}
Equivalently, we can write
\begin{equation}\label{eq:B_quadratic_matrix}
B(\boldsymbol{\Delta})
\approx
\boldsymbol{\Delta}^{T}
\mathbf{G}_{\rm B}
\boldsymbol{\Delta},
\end{equation}
where
\begin{equation*}
\mathbf{G}_{\rm B}
=
\kappa
\begin{bmatrix}
\alpha_y & 0\\
0 & \alpha_z
\end{bmatrix}.
\end{equation*}
\end{lem}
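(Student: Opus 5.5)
The plan is a second-order Taylor expansion of $B(\boldsymbol{\Delta})$ about $\boldsymbol{\Delta}=\mathbf{0}$, done in two stages: first expand $\eta(\boldsymbol{\Delta})$ from Lemma~\ref{lem:steering_correlation}, then substitute into the closed form \eqref{eq:B_delta_def} from Theorem~\ref{thm:pairwise_error}.

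Start with the one-dimensional factor. Write $u=\pi\Delta y/(2D)$, so that
\begin{equation*}
\eta_y=\frac{\sin^2(M_y u)}{M_y^2\sin^2 u}.
\end{equation*}
A cleaner route than expanding the Dirichlet ratio directly is to go back to the geometric sum in the proof of Lemma~\ref{lem:steering_correlation}. There, $\mathbf{a}_{y,i}^H\mathbf{a}_{y,j}=\frac{1}{M_y}\sum_m e^{j2mu}$ is the characteristic function of the uniform distribution on $\{0,\dots,M_y-1\}$, evaluated at $2u$. Hence
\begin{equation*}
|\mathbf{a}_{y,i}^H\mathbf{a}_{y,j}|^2=1-(2u)^2\,\mathrm{Var}(m)+O(u^4),
\qquad
\mathrm{Var}(m)=\frac{M_y^2-1}{12}.
\end{equation*}
This gives $\eta_y\approx 1-\alpha_y\Delta y^2$, since $4u^2(M_y^2-1)/12=\pi^2(M_y^2-1)\Delta y^2/(12D^2)$. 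By symmetry, $\eta_z\approx 1-\alpha_z\Delta z^2$. Multiplying the two factors and dropping the cross term of fourth order yields $\eta(\boldsymbol{\Delta})\approx 1-(\alpha_y\Delta y^2+\alpha_z\Delta z^2)$.

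Next, substitute into \eqref{eq:B_delta_def}. At $\eta=1$ the numerator equals $(1+\gamma_0/2)^2-\gamma_0^2/4=1+\gamma_0$, so $B(\mathbf{0})=0$. Writing $\eta=1-\delta$, the argument of the logarithm becomes
\begin{equation*}
1+\frac{\gamma_0^2}{4(1+\gamma_0)}\,\delta=1+\kappa\delta.
\end{equation*}
Using $\log(1+x)\approx x$ then gives $B\approx\kappa\delta=\kappa(\alpha_y\Delta y^2+\alpha_z\Delta z^2)$, which is \eqref{eq:B_quadratic}. The matrix form \eqref{eq:B_quadratic_matrix} follows by reading off the diagonal entries. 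There are no cross terms because $\eta$ factorizes, so each factor depends on only one coordinate.

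Finally, the meaning of "main-lobe regime" should be made precise: the approximation holds when $M_y\Delta y/D$ and $M_z\Delta z/D$ are small, which keeps both $\delta$ and $\kappa\delta$ small. The step needing the most care is the expansion of $\eta_y$ with the correct constant $(M_y^2-1)/12$. A naive expansion of $\sin(M_yu)/\sin u$ is error-prone, which is why the variance (characteristic-function) argument is used instead. The second linearization, of the logarithm, is valid only when $\kappa\delta\ll 1$. At high SNR $\kappa\sim\gamma_0/4$ is large, so the regime is genuinely "small displacement relative to $D/(M\sqrt{\gamma_0})$", and this condition should be stated alongside the approximation.
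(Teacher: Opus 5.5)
Your proposal is correct and follows the same two-stage route as the paper. You expand each Dirichlet factor to second order, drop the fourth-order cross term, rewrite the argument of the logarithm as $1+\kappa(1-\eta)$, and linearize with $\log(1+x)\approx x$. The variance argument for the discrete uniform distribution is a clean justification of the coefficient $(M^2-1)/12$, which the paper simply asserts. Your caveat is also a valid refinement of the paper's "argument close to one" remark: the log-linearization additionally needs $\kappa(1-\eta)\ll 1$, which at high SNR is a stronger condition than the main-lobe condition.
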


\begin{proof}
In the main-lobe regime, the relevant pairwise displacements are
small enough for a second-order approximation of the steering
correlation.

For small $x$,
\begin{equation*}
\frac{1}{M^2}
\left|
\frac{\sin(Mx/2)}{\sin(x/2)}
\right|^2
\approx
1-\frac{M^2-1}{12}x^2.
\end{equation*}
Using $x_y=\pi\Delta y/D$ and $x_z=\pi\Delta z/D$, we obtain
\begin{align*}
\eta_y(\Delta y)
&\approx
1-
\frac{\pi^2(M_y^2-1)}{12D^2}
\Delta y^2
=
1-\alpha_y\Delta y^2,\\
\eta_z(\Delta z)
&\approx
1-
\frac{\pi^2(M_z^2-1)}{12D^2}
\Delta z^2
=
1-\alpha_z\Delta z^2.
\end{align*}
Therefore,
\begin{align*}
\eta(\Delta y,\Delta z)
&=
\eta_y(\Delta y)\eta_z(\Delta z)\\
&\approx
(1-\alpha_y\Delta y^2)(1-\alpha_z\Delta z^2)\\
&\approx
1-\alpha_y\Delta y^2-\alpha_z\Delta z^2,
\end{align*}
where the fourth-order term has been neglected.

Substituting this into \eqref{eq:B_delta_def}, we get
\begin{align*}
B(\Delta y,\Delta z)
&=
\log
\left[
1+
\frac{\gamma_0^2}{4(1+\gamma_0)}
\left(
1-\eta(\Delta y,\Delta z)
\right)
\right]\\
&\approx
\log
\left[
1+
\kappa
\left(
\alpha_y\Delta y^2+\alpha_z\Delta z^2
\right)
\right].
\end{align*}
In the main-lobe regime, the argument of the logarithm is close to one, so $\log(1+x)\approx x$, yielding \eqref{eq:B_quadratic}.
\end{proof}

Lemma~\ref{lem:quadratic_B} implies that the level sets of $B(\boldsymbol{\Delta})$ are approximately ellipses. 
Accordingly, the forbidden region in \eqref{eq:FJ_def} is approximated by
\begin{equation}\label{eq:ellipse_forbidden}
\mathcal{F}_J^{\rm ml}
=
\left\{
\boldsymbol{\Delta}\in\mathbb{R}^2:
\boldsymbol{\Delta}^{T}
\mathbf{G}_{\rm B}
\boldsymbol{\Delta}
<
B_J
\right\}.
\end{equation}
Its semi-axes are
\begin{equation*}
r_y(J)
=
\sqrt{\frac{B_J}{\kappa\alpha_y}},
\quad
r_z(J)
=
\sqrt{\frac{B_J}{\kappa\alpha_z}}.
\end{equation*}

The ellipse can be converted into a disk through the linear transformation
\begin{equation}\label{eq:linear_transform}
\widetilde{\mathbf{r}}
=
\mathbf{T}\mathbf{r},
~~
\mathbf{T}
=
\mathbf{G}_{\rm B}^{1/2}
=
\begin{bmatrix}
\sqrt{\kappa\alpha_y} & 0\\
0 & \sqrt{\kappa\alpha_z}
\end{bmatrix}.
\end{equation}

For a displacement vector 
$\boldsymbol{\Delta}=\mathbf{r}_i-\mathbf{r}_j$, 
its transformed displacement is $\widetilde{\boldsymbol{\Delta}}
\triangleq
\mathbf{T}\boldsymbol{\Delta}
=
\widetilde{\mathbf{r}}_i-\widetilde{\mathbf{r}}_j$.
Then $\boldsymbol{\Delta}^{T}\mathbf{G}_{\rm B}\boldsymbol{\Delta}
=
\|\widetilde{\boldsymbol{\Delta}}\|_2^2$,
and the forbidden ellipse becomes the disk
\begin{equation}\label{eq:transformed_forbidden_disk}
\widetilde{\mathcal{F}}_J^{\rm ml}
=
\left\{
\widetilde{\boldsymbol{\Delta}}:
\|\widetilde{\boldsymbol{\Delta}}\|_2^2<B_J
\right\}.
\end{equation}

\begin{thm}[Main-lobe hexagonal lattice design]\label{thm:hex_lattice_design}
Under the main-lobe approximation in Lemma~\ref{lem:quadratic_B}, the densest lattice codebook is obtained by placing a hexagonal lattice with minimum distance $\sqrt{B_J}$ in the transformed plane $\widetilde{\mathbf{r}}=\mathbf{T}\mathbf{r}$ and mapping it back to the original agent plane by $\mathbf{T}^{-1}$. 
Ignoring boundary effects, the resulting achievable alphabet size is approximately
\begin{equation}\label{eq:Jhex_Xi}
J_{\rm hex}
\approx
\frac{2\Xi L}{\sqrt{3} \log\frac{J-1}{\epsilon}},
\end{equation}
where
\begin{equation*}
\Xi
=
\frac{
\pi^2 a_y a_z \gamma_0^2
\sqrt{(M_y^2-1)(M_z^2-1)}
}{
48D^2(1+\gamma_0)
}.
\end{equation*}

For large $J$ and small $\epsilon$,
\begin{equation}\label{eq:Jhex_lb_closed}
J_{\rm hex}
\approx
\left\lfloor
\frac{
\Xi_{\rm h}L
}{
W_0\left(\frac{\Xi_{\rm h}L}{\epsilon}\right)
}
\right\rfloor,
\end{equation}
where $\Xi_{\rm h}=\frac{2\Xi}{\sqrt{3}}$ and $W_0$ is the principal branch of the Lambert-W function.
\end{thm}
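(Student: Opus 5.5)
The plan is to reduce the lattice design problem, under the approximation of Lemma~\ref{lem:quadratic_B}, to the classical densest lattice packing of equal disks in the plane. The proof then has three parts: solve that packing problem in the transformed plane, count lattice points by area, and solve the resulting self-consistency equation in $J$.

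\textbf{Reduction to a disk packing.} Fix a candidate size $J$ and replace $\mathcal{F}_J$ by the ellipse $\mathcal{F}_J^{\rm ml}$ in \eqref{eq:ellipse_forbidden}. For a generator $\mathbf{G}$, set $\widetilde{\mathbf{G}}=\mathbf{T}\mathbf{G}$. Since $\boldsymbol{\Delta}^T\mathbf{G}_{\rm B}\boldsymbol{\Delta}=\|\mathbf{T}\boldsymbol{\Delta}\|_2^2$, the avoidance condition \eqref{eq:lattice_avoid_condition} becomes
\[
\|\widetilde{\mathbf{G}}\mathbf{k}\|_2\geq\sqrt{B_J}\quad\text{for all }\mathbf{k}\neq\mathbf{0},
\]
using the disk \eqref{eq:transformed_forbidden_disk}. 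In words, the transformed lattice must have minimum distance at least $\sqrt{B_J}$. By \eqref{eq:lattice_size_approx}, maximizing $J_\Lambda(\mathbf{G})$ means minimizing
\[
|\det\mathbf{G}|=\frac{|\det\widetilde{\mathbf{G}}|}{\det\mathbf{T}}.
\]
Since $\det\mathbf{T}$ is fixed, this is the problem of minimizing the covolume of a planar lattice with prescribed minimum distance. I would invoke the classical result of Lagrange and Gauss: the optimum is the hexagonal lattice, with covolume $\frac{\sqrt3}{2}B_J$. This follows from a reduced-basis argument, taking $|\mathbf{g}_1|\le|\mathbf{g}_2|$ and the angle between them in $[60^\circ,90^\circ]$. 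Mapping back by $\mathbf{T}^{-1}$ gives the stated construction, and Proposition~\ref{prop:lattice_rate} certifies its $\epsilon$-reliability under the approximation.

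\textbf{Counting.} The count is
\[
J\approx\frac{a_ya_z\det\mathbf{T}}{\tfrac{\sqrt3}{2}B_J}.
\]
I would then compute
\[
a_ya_z\det\mathbf{T}=a_ya_z\,\kappa\sqrt{\alpha_y\alpha_z}=\frac{a_ya_z\,\gamma_0^2\,\pi^2\sqrt{(M_y^2-1)(M_z^2-1)}}{48D^2(1+\gamma_0)}=\Xi.
\]
Substituting $B_J=\frac1L\log\frac{J-1}{\epsilon}$ gives \eqref{eq:Jhex_Xi}. This is an implicit equation, because $J$ appears on both sides.

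\textbf{Solving for $J$.} The function $J\log\frac{J-1}{\epsilon}$ is increasing in $J$ wherever it is positive. Feasibility therefore holds exactly for $J$ up to the root of
\[
J\log\frac{J-1}{\epsilon}=\Xi_{\rm h}L,
\]
so the largest feasible integer is the floor of that root. For large $J$, I would replace $J-1$ by $J$. Setting $J=\epsilon e^{w}$ turns the equation into $w e^{w}=\Xi_{\rm h}L/\epsilon$, so $w=W_0(\Xi_{\rm h}L/\epsilon)$. Using $e^{w}=\frac{\Xi_{\rm h}L/\epsilon}{w}$ then gives $J=\Xi_{\rm h}L/w$, which is \eqref{eq:Jhex_lb_closed}.

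The main obstacle is not this algebra but rigor around the approximations. First, the hexagonal-optimality claim needs the packing fact, which I would cite rather than reprove. Second, the word ``densest'' only holds within the main-lobe model: the exact $\mathcal{F}_J$ also contains sidelobe regions. I would therefore state explicitly that $\mathcal{F}_J$ is taken to coincide with $\mathcal{F}_J^{\rm ml}$, and that boundary effects and the replacement $J-1\approx J$ are neglected.
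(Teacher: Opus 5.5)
Your proposal is correct and follows the paper's proof essentially step for step: whiten the forbidden ellipse into a disk via $\mathbf{T}$, use the optimality of the hexagonal lattice with cell area $\frac{\sqrt{3}}{2}B_J$, count $J\approx \Xi/(\frac{\sqrt{3}}{2}B_J)$, and solve $J\log\frac{J}{\epsilon}=\Xi_{\rm h}L$ by substituting $J=\epsilon e^{w}$ and applying the Lambert-$W$ function. Your monotonicity argument for $J\log\frac{J-1}{\epsilon}$ is a small but useful addition, since it justifies the floor in \eqref{eq:Jhex_lb_closed}, which the paper leaves implicit.
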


\begin{proof}
In the transformed plane, the forbidden region is a disk of radius $\sqrt{B_J}$. 
Thus, a feasible lattice must have minimum Euclidean distance at least $\sqrt{B_J}$. 
Among two-dimensional lattices with a given minimum distance, the hexagonal lattice has the smallest fundamental cell area. 
For minimum distance $\sqrt{B_J}$, this cell area is $A_{\rm hex}(B_J)
=
\frac{\sqrt{3}}{2}B_J$.

The transformed agent plane has area
\begin{equation*}
|\widetilde{\mathcal{A}}|
=
|\det\mathbf{T}|\,|\mathcal{A}|
=
a_y a_z\sqrt{\det\mathbf{G}_{\rm B}}
\triangleq
\Xi.
\end{equation*}
Ignoring boundary effects, the number of retained lattice points is therefore
\begin{equation*}
J
\approx
\frac{
|\widetilde{\mathcal{A}}|
}{
A_{\rm hex}(B_J)
}
=
\frac{
2\Xi
}{
\sqrt{3}B_J
},
\end{equation*}
where $B_J
=
\frac{1}{L}
\log\frac{J-1}{\epsilon}$.

This expression is informative but implicit because $J$ appears on both sides. 
To obtain a conservative explicit expression, we use
\begin{equation*}
\log\frac{J-1}{\epsilon}
\approx
\log\frac{J}{\epsilon},
\end{equation*}
for large $J$ and small $\epsilon$. Then
\begin{equation*}
J
\approx
\frac{
2\Xi L
}{
\sqrt{3}\log\frac{J}{\epsilon}
}.
\end{equation*}
Solving the equality $J\log\frac{J}{\epsilon}=\frac{2}{\sqrt{3}}\Xi L$ gives a closed-form expression through the Lambert-$W$ function. 
Indeed, let $u=\log\frac{J}{\epsilon}$.
Then $J=\epsilon e^u$, and the equality becomes $\epsilon u e^u
\allowbreak =
\frac{2}{\sqrt{3}}\Xi L$.
Hence, we have
\begin{equation*}
u
=
W_0
\left(
\frac{2\Xi L}{\sqrt{3}\epsilon}
\right),
~~
J
=
\frac{
2\Xi L
}{\sqrt{3}
W_0\left(\frac{2\Xi L}{\sqrt{3}\epsilon}\right)
}.
\end{equation*}
Eq.~\eqref{eq:Jhex_lb_closed} follows by defining $\Xi_{\rm h}=\frac{2\Xi}{\sqrt{3}}$.
\end{proof}

Consequently, the achievable rate is
\begin{equation*}
R_{\rm hex}(L,\epsilon)
=
\frac{1}{LT_p}
\log_2
J_{\rm hex}.
\end{equation*}

Theorem~\ref{thm:hex_lattice_design} gives a closed-form geometric design rule under the main-lobe approximation. 
The BS sensing system induces a reliability metric $\mathbf{G}_{\rm B}$ on the agent plane. 
After whitening this metric, the approximately optimal lattice is hexagonal. 
Mapping it back to the original agent plane yields an inverse-scaled hexagonal lattice that is denser along directions where the BS has finer sensing resolution.

% The scalar $\Xi$ has a useful interpretation: it is the area of the agent plane measured in the BS-induced reliability metric. 
% It increases with the controllable physical area $a_y a_z$, the sensing SNR $\gamma_0$, and the array dimensions $M_y,M_z$, and decreases with the squared distance $D^2$.

\begin{rem}
It is worth noting that the result is approximate as it ignores sidelobes and boundary effects. 
For finite systems, the resulting lattice should be truncated to $\mathcal{A}$ and its exact reliability should be verified using the true field $B(\boldsymbol{\Delta})$. 
\end{rem}

\subsection{Snapshot-Number Optimization}\label{subsec:L_optimization}

The number of sensing snapshots $L$ is a distinctive resource in embodied communication. 
Increasing $L$ improves reliability because pairwise errors decay as $\exp[-LB(\boldsymbol{\Delta})]$. 
At the same time, one embodied channel use occupies $LT_p$ seconds, reducing the rate normalization factor. 
Thus, $L$ creates a sensing-duration tradeoff.

The explicit lower bound in \eqref{eq:Jhex_lb_closed} makes this tradeoff transparent. 
Ignoring the floor operation, the rate is
\begin{equation}\label{eq:R_L_approx_hex}
R_{\rm hex}(L,\epsilon)
\approx
\frac{1}{LT_p}
\log_2
\left[
\frac{
\Xi_{\rm h}L
}{
W_0\left(\frac{\Xi_{\rm h}L}{\epsilon}\right)
}
\right].
\end{equation}
The term inside the logarithm increases with $L$, because more snapshots allow more reliable position discrimination and hence a denser physical alphabet. 
However, the prefactor $1/(LT_p)$ decreases with $L$. 
The balance between these two effects yields a finite optimal sensing duration.

\begin{thm}[Optimal snapshot number]\label{thm:optimal_L}
Under the main-lobe hexagonal lattice approximation, the continuous relaxation of $\max_{L>0}
R_{\rm hex}(L,\epsilon)$ has the stationary point
\begin{equation}\label{eq:L_star_approx_hex}
L_{\rm cont}^{\star}
\approx
\frac{\epsilon}{\Xi_{\rm h}}
y^\star e^{y^\star},
\end{equation}
where $y^\star
=
\frac{
q+\sqrt{q^2+4q}
}{2}$ and $q
=-\log{\epsilon}$.
For integer snapshot numbers, one may evaluate $L\!=\!\lfloor L_{\rm cont}^{\star}\rfloor$ and $L\!=\!\lceil L_{\rm cont}^{\star}\rceil$ using the exact reliability condition based on the true field $B(\boldsymbol{\Delta})$.
\end{thm}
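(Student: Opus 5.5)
The plan is to reparametrize the relaxed rate \eqref{eq:R_L_approx_hex} so that the Lambert-$W$ function disappears. With that done, the stationarity condition becomes a quadratic.

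\emph{Step 1: rewrite $\log J_{\rm hex}$.} Set $x \triangleq \Xi_{\rm h}L/\epsilon$ and $y \triangleq W_0(x)$. By definition of $W_0$, $x = y e^{y}$, and $W_0$ is increasing on $(0,\infty)$. Inside the logarithm of \eqref{eq:R_L_approx_hex},
\begin{equation*}
\frac{\Xi_{\rm h}L}{W_0(\Xi_{\rm h}L/\epsilon)} = \frac{\epsilon x}{y} = \epsilon e^{y}.
\end{equation*}
This is exactly the relation $J=\epsilon e^u$ from the proof of Theorem~\ref{thm:hex_lattice_design}. Hence $\log J_{\rm hex} = y + \log\epsilon = y - q$ with $q=-\log\epsilon>0$. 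Also $L = \epsilon x/\Xi_{\rm h} = \epsilon y e^{y}/\Xi_{\rm h}$.

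\emph{Step 2: reduce to a one-variable problem.} Substituting gives
\begin{equation*}
R_{\rm hex} = \frac{\Xi_{\rm h}}{T_p\,\epsilon \ln 2}\,\frac{y-q}{y e^{y}}.
\end{equation*}
The map $L\mapsto y$ is a smooth increasing bijection of $(0,\infty)$ onto itself, since $L\mapsto x$ is linear and $W_0$ is increasing. Therefore stationary points in $L$ correspond one-to-one to stationary points in $y$. It suffices to maximize $g(y)=(1-q/y)e^{-y}$.

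\emph{Step 3: solve the stationarity condition.} Differentiating,
\begin{equation*}
g'(y)=e^{-y}\left(\frac{q}{y^2}-1+\frac{q}{y}\right).
\end{equation*}
Setting $g'(y)=0$ gives $y^2-qy-q=0$. Its unique positive root is $y^\star=(q+\sqrt{q^2+4q})/2$. Mapping back through $L=\epsilon y e^{y}/\Xi_{\rm h}$ yields \eqref{eq:L_star_approx_hex}.

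\emph{Step 4: confirm it is a maximum.} On the relevant range $y>q$ (positive rate), we have $g'(q)=e^{-q}/q>0$, while $g'(y)<0$ for $y>y^\star$. So $y^\star>q$ is the unique interior maximizer. For $y\le q$ the rate is nonpositive, so that range is irrelevant.

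The integer-rounding remark in the statement is procedural: it only asks that $\lfloor L_{\rm cont}^\star\rfloor$ and $\lceil L_{\rm cont}^\star\rceil$ be checked against the true field $B(\boldsymbol{\Delta})$, so it needs no proof.

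The only delicate point is the \emph{legitimacy of the change of variables}, in particular the monotonicity of $W_0$. This is what turns the apparently transcendental optimization over $L$ into an elementary quadratic. The "$\approx$" in the statement simply inherits the approximations already made in Theorem~\ref{thm:hex_lattice_design}.
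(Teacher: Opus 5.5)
Your proposal is correct and follows essentially the same route as the paper: the substitution $y=W_0(\Xi_{\rm h}L/\epsilon)$, the identity $J_{\rm hex}\approx\epsilon e^{y}$ with $L=(\epsilon/\Xi_{\rm h})ye^{y}$, reducing to maximizing $(y-q)/(ye^{y})$, and solving $y^2-qy-q=0$. Your extra checks are valid refinements that the paper's proof leaves implicit: the monotonicity of $W_0$ justifying the change of variables, and the sign analysis showing $y^\star>q$ is the unique maximizer.
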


\begin{proof}
Let $x
=
\frac{\Xi_{\rm h}L}{\epsilon}$ and $y
=
W_0(x)$.
Then $x=ye^y$, and hence $L
=
\frac{\epsilon}{\Xi_{\rm h}}ye^y$. Therefore,
\begin{equation*}
J_{\rm hex}
\approx
\frac{\Xi_{\rm h}L}{W_0(\Xi_{\rm h}L/\epsilon)}
=
\epsilon e^y.
\end{equation*}
Up to the constant factor $1/(T_p\log 2)$, maximizing \eqref{eq:R_L_approx_hex} is equivalent to maximizing
\begin{equation*}
g(y)
=
\frac{
\log J_{\rm hex}
}{
L
}
=
\frac{
\log\epsilon+y
}{
(\epsilon/\Xi_{\rm h})ye^y
}.
\end{equation*}
The constant $\epsilon/\Xi_{\rm h}$ does not affect the optimizer. 
Let $q=-\log\epsilon$. 
Then the objective is proportional to
\begin{equation*}
\widetilde{g}(y)
=
\frac{y-q}{ye^y}.
\end{equation*}
The feasible rate requires $y>q$. 
Taking the derivative gives
\begin{equation*}
\frac{d}{dy}
\left[
\frac{y-q}{ye^y}
\right]
=
e^{-y}
\left[
\frac{q}{y^2}
-
\frac{y-q}{y}
\right].
\end{equation*}
Setting the derivative to zero yields $\frac{q}{y^2}
=
\frac{y-q}{y}$.
The positive root is
\begin{equation*}
y^\star
=
\frac{
q+\sqrt{q^2+4q}
}{2}.
\end{equation*}
Substituting $L=(\epsilon/\Xi_{\rm h})ye^y$ yields \eqref{eq:L_star_approx_hex}.
\end{proof}

Theorem~\ref{thm:optimal_L} shows that increasing the sensing duration is not always beneficial. 
The result also reveals the role of system parameters. 
A larger transformed-area parameter $\Xi_{\rm h}$ reduces the required optimal snapshot number because the BS-agent geometry already supports a dense physical alphabet. 
In contrast, a smaller $\Xi_{\rm h}$, caused for example by larger distance $D$, smaller sensing SNR $\gamma_0$, or smaller array aperture, requires more snapshots to achieve reliable discrimination.
\section{Upper Bounds on $\epsilon$-Capacity}\label{sec:upper_bounds}

Section~\ref{sec:achievable} constructs reliable embodied codebooks and thereby establishes achievable lower bounds on $C_{\epsilon}(L)$. 
We now derive upper bounds that limit how large the embodied alphabet can be.
We present two complementary converses. 
The first is an information-theoretic converse based on Fano's inequality. 
It upper-bounds the number of reliably distinguishable physical positions by the amount of statistical information that the BS sensing observations can carry about the message. 
The second is a geometric packing converse. 
It translates binary testing impossibility into a necessary physical separation between codewords, and then upper-bounds how many such separated positions can fit inside the agent plane.

\subsection{Information-Theoretic Converse}\label{subsec:info_converse}

$C_{\epsilon}(L)$ is defined through the largest number of messages that can be decoded with maximum error probability no larger than $\epsilon$. 
Fano's inequality applies to any such reliable decoding problem with a discrete message, an observation, and a decoder \cite{cover1999elements}. 
Therefore, it can be used to upper-bound the maximum feasible embodied alphabet size.

We first define the sensing covariance set induced by the agent plane:
\begin{equation*}
\mathcal{Q}_{\mathcal{A}}
=
\operatorname{conv}
\left\{
\mathbf{a}(\mathbf{r})\mathbf{a}^{H}(\mathbf{r})
:
\mathbf{r}\in\mathcal{A}
\right\}.
\end{equation*}
Each element of $\mathcal{Q}_{\mathcal{A}}$ is a convex mixture of rank-one spatial covariance matrices generated by feasible physical positions. 
This set captures the fact that the embodied alphabet cannot excite arbitrary array covariance directions; it is constrained by the finite physical support $\mathcal{A}$.

Define the support-constrained single-snapshot information upper bound
\begin{equation}\label{eq:Csnap_A_ub_def}
C_{\rm snap}^{\rm ub}(\mathcal{A})
=
\sup_{\mathbf{Q}\in\mathcal{Q}_{\mathcal{A}}}
\log_2
\frac{
\det\left(
\mathbf{I}_M+\gamma_0\mathbf{Q}
\right)
}{
1+\gamma_0
}.
\end{equation}

\begin{thm}[Support-constrained information converse]\label{thm:support_fano_upper}
The $\epsilon$-capacity of the embodied channel satisfies
\begin{equation}\label{eq:support_info_upper_capacity}
C_{\epsilon}(L)
\leq
C_{\rm info}^{\rm ub}(L,\epsilon)
\triangleq
\frac{
C_{\rm snap}^{\rm ub}(\mathcal{A})
+
\frac{1}{L}h_2(\epsilon)
}{
(1-\epsilon)T_p
},
\end{equation}
where $h_2(\epsilon)
=
-\epsilon\log_2\epsilon
-
(1-\epsilon)\log_2(1-\epsilon)$ is the binary entropy function.
\end{thm}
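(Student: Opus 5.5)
Our plan is to run the standard weak-converse argument on a single embodied channel use. We then bound the resulting mutual information by the Gaussian covariance-mixture capacity restricted to the support-induced set $\mathcal{Q}_{\mathcal{A}}$.

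Fix any codebook $\mathfrak{C}=\{\mathbf{r}_1,\ldots,\mathbf{r}_J\}\subseteq\mathcal{A}$ with $P_{\max}^{\star}(\mathfrak{C})\le\epsilon$, so that some decoder achieves maximum error at most $\epsilon$. We let $W$ be uniform on $\{1,\ldots,J\}$. The average error is then at most $\epsilon$, and Fano's inequality gives $H(W\mid\widehat W)\le h_2(\epsilon)+\epsilon\log_2(J-1)$. Combined with $H(W)=\log_2 J$ and the data-processing inequality along the cascade \eqref{eq:cascade_single}, this yields $(1-\epsilon)\log_2 J\le I(W;\mathbf{Y})+h_2(\epsilon)$. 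The statement assumes $\epsilon<1$, and the $\epsilon\log_2(J-1)\le\epsilon\log_2J$ relaxation is harmless.

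Next we single-letterize. Conditioned on $W$, the snapshots $\mathbf{y}_1,\ldots,\mathbf{y}_L$ are i.i.d. Hence
\begin{equation*}
I(W;\mathbf{Y})=h(\mathbf{Y})-h(\mathbf{Y}\mid W)\le\sum_{\ell=1}^{L}\bigl[h(\mathbf{y}_\ell)-h(\mathbf{y}_\ell\mid W)\bigr]=L\,I(W;\mathbf{y}_1).
\end{equation*}
The conditional entropy is explicit:
\begin{equation*}
h(\mathbf{y}_1\mid W)=\log_2\bigl[(\pi e)^M\sigma^{2M}(1+\gamma_0)\bigr],
\end{equation*}
because $\det\mathbf{R}_j=\sigma^{2M}(1+\gamma_0)$ for every $j$, as in the proof of Theorem~\ref{thm:pairwise_error}. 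The marginal $\mathbf{y}_1$ is a Gaussian mixture with covariance
\begin{equation*}
\sigma^2\bigl(\mathbf{I}_M+\gamma_0\mathbf{Q}\bigr),\qquad \mathbf{Q}=\frac{1}{J}\sum_j\mathbf{a}(\mathbf{r}_j)\mathbf{a}^H(\mathbf{r}_j).
\end{equation*}
This $\mathbf{Q}$ lies in $\mathcal{Q}_{\mathcal{A}}$ because every $\mathbf{r}_j\in\mathcal{A}$. By the maximum-entropy property of the circularly symmetric Gaussian under a covariance constraint,
\begin{equation*}
h(\mathbf{y}_1)\le\log_2\bigl[(\pi e)^M\sigma^{2M}\det(\mathbf{I}_M+\gamma_0\mathbf{Q})\bigr].
\end{equation*}
Subtracting the conditional entropy gives
\begin{equation*}
I(W;\mathbf{y}_1)\le\log_2\frac{\det(\mathbf{I}_M+\gamma_0\mathbf{Q})}{1+\gamma_0}\le C_{\rm snap}^{\rm ub}(\mathcal{A}).
\end{equation*}

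Putting the pieces together, we get $(1-\epsilon)\log_2J\le L\,C_{\rm snap}^{\rm ub}(\mathcal{A})+h_2(\epsilon)$ for every feasible codebook, and in particular for the maximizer $J_\epsilon^\star(L)$. Dividing by $(1-\epsilon)LT_p$ and using \eqref{eq:C_epsilon_def} yields \eqref{eq:support_info_upper_capacity}.

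The main point needing care is the single-letterization step. The snapshots are not independent unconditionally, because they share $W$, so we must use subadditivity of $h(\mathbf{Y})$ and not a product form. We also must check that uniform $W$ together with a maximum-error guarantee legitimately feeds Fano. Both are standard. The only model-specific ingredient is that the equal determinants make $h(\mathbf{y}_1\mid W)$ independent of the codebook, which is what produces the clean normalization by $1+\gamma_0$.
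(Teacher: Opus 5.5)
Your proof is correct and follows the paper's argument: Fano with a uniform auxiliary prior, equal conditional determinants $\det\mathbf{R}_j=\sigma^{2M}(1+\gamma_0)$, and Gaussian max-entropy with the averaged covariance built from $\mathbf{Q}_{\mathfrak{C}}\in\mathcal{Q}_{\mathcal{A}}$. The only cosmetic difference is that you single-letterize via subadditivity of $h(\mathbf{Y})$, whereas the paper applies max-entropy to the stacked vector after showing its covariance is $\mathbf{I}_L\otimes\bar{\mathbf{R}}$; both give the identical bound.
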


\begin{proof}
Consider any $J$-position codebook $\mathfrak{C}\subseteq\mathcal{A}$ for which there exists a decoder $g$ satisfying $P_{\max}(\mathfrak{C},g)
\leq
\epsilon$.
The maximum-error guarantee holds for every message individually. 
Hence, it also holds under any prior distribution on the message set. 
For the converse, we choose the auxiliary prior $W\sim {\rm Unif}\{1,\ldots,J\}$.

Under this uniform prior, the average error probability is no larger than $\epsilon$. 
By Fano's inequality,
\begin{equation*}
H(W|\mathbf{Y})
\leq
h_2(\epsilon)
+
\epsilon\log_2(J-1)
\leq
h_2(\epsilon)
+
\epsilon\log_2J.
\end{equation*}
Therefore,
\begin{align*}
I(W;\mathbf{Y})
&=
H(W)-H(W|\mathbf{Y})\\
&\geq
\log_2J
-
h_2(\epsilon)
-
\epsilon\log_2J\\
&=
(1-\epsilon)\log_2J
-
h_2(\epsilon),
\end{align*}
and hence
\begin{equation}\label{eq:logJ_fano_support}
\log_2J
\leq
\frac{
I(W;\mathbf{Y})+h_2(\epsilon)
}{
1-\epsilon
}.
\end{equation}

It remains to upper-bound $I(W;\mathbf{Y})$ for an arbitrary feasible codebook. 
For one snapshot, $\mathbf{y}_{\ell}|W=j
\sim
\mathcal{CN}(\mathbf{0},\mathbf{R}_j)$,
where $\mathbf{R}_j
=
\sigma^2
\left(
\mathbf{I}_M+
\gamma_0
\mathbf{a}(\mathbf{r}_j)\mathbf{a}^{H}(\mathbf{r}_j)
\right)$ and $\det(\mathbf{R}_j)
=
\sigma^{2M}(1+\gamma_0)$, $\forall j$.
Conditioned on $W=j$, the $L$ snapshots are independent. 
Thus,
\begin{equation*}
h(\mathbf{Y}|W\!=\!j)
\!=\!
\sum_{\ell=1}^{L}
h(\mathbf{y}_{\ell}|W\!=\!j)
\!=\!
L
\log_2
[(\pi e)^M\sigma^{2M}(1\!+\!\gamma_0)].
\end{equation*}
Since this expression is common to all $j$, we can write
\begin{equation}\label{eq:conditional_entropy_support}
h(\mathbf{Y}|W)
=
L
\log_2
\left[
(\pi e)^M
\sigma^{2M}(1+\gamma_0)
\right].
\end{equation}

We next upper-bound the unconditional entropy $h(\mathbf{Y})$. 
Although the snapshots are conditionally independent given $W$, they are not generally independent after marginalizing over $W$, because they share the same selected physical position. 
Therefore, we work with the stacked observation
\begin{equation*}
\mathbf{y}_{\rm vec}
=
\operatorname{vec}(\mathbf{Y})
=
[\mathbf{y}_1^T,\ldots,\mathbf{y}_L^T]^T
\in\mathbb{C}^{ML}.
\end{equation*}
Let
\begin{equation*}
\mathbf{Q}_{\mathfrak{C}}
=
\frac{1}{J}
\sum_{j=1}^{J}
\mathbf{a}(\mathbf{r}_j)
\mathbf{a}^{H}(\mathbf{r}_j).
\end{equation*}
Since each $\mathbf{r}_j\in\mathcal{A}$, we have $\mathbf{Q}_{\mathfrak{C}}\in \mathcal{Q}_{\mathcal{A}}$.
The per-snapshot unconditional covariance is
\begin{equation*}
\bar{\mathbf{R}}
=
\mathbb{E}[\mathbf{R}_W]
=
\sigma^2
\left(
\mathbf{I}_M+
\gamma_0\mathbf{Q}_{\mathfrak{C}}
\right).
\end{equation*}
For two distinct snapshots $\ell\neq m$, conditioned on $W$ they are independent and zero mean, so $\mathbb{E}[\mathbf{y}_{\ell}\mathbf{y}_{m}^{H}|W]
=
\mathbf{0}$.
Averaging over $W$ gives $\mathbb{E}[\mathbf{y}_{\ell}\mathbf{y}_{m}^{H}]
=
\mathbf{0}$, $\ell\neq m$.
Therefore,
\begin{equation*}
\operatorname{Cov}(\mathbf{y}_{\rm vec})
=
\mathbf{I}_L\otimes \bar{\mathbf{R}}.
\end{equation*}

For a fixed covariance matrix, the proper complex Gaussian distribution maximizes differential entropy. 
Hence,
\begin{align}
h(\mathbf{Y})
=
h(\mathbf{y}_{\rm vec})
&\leq
\log_2
\left[
(\pi e)^{ML}
\det(\mathbf{I}_L\otimes\bar{\mathbf{R}})
\right] \notag\\
&=
L
\log_2
\left[
(\pi e)^M
\det(\bar{\mathbf{R}})
\right].
\label{eq:unconditional_entropy_support}
\end{align}
Combining \eqref{eq:conditional_entropy_support} and \eqref{eq:unconditional_entropy_support},
\begin{align*}
I(W;\mathbf{Y})
&=
h(\mathbf{Y})-h(\mathbf{Y}|W)
\leq
L
\log_2
\frac{
\det(\bar{\mathbf{R}})
}{
\sigma^{2M}(1+\gamma_0)
}\\
&=
L
\log_2
\frac{
\det\left(
\mathbf{I}_M+\gamma_0\mathbf{Q}_{\mathfrak{C}}
\right)
}{
1+\gamma_0
}.
\end{align*}
Since $\mathbf{Q}_{\mathfrak{C}}\in\mathcal{Q}_{\mathcal{A}}$, \eqref{eq:Csnap_A_ub_def} gives $I(W;\mathbf{Y})
\leq
L C_{\rm snap}^{\rm ub}(\mathcal{A})$.
Substituting it into \eqref{eq:logJ_fano_support}, we obtain
\begin{equation}
\log_2J
\leq
\frac{
L C_{\rm snap}^{\rm ub}(\mathcal{A})
+
h_2(\epsilon)
}{
1-\epsilon
}.
\end{equation}
Dividing by $LT_p$ and maximizing over all $\epsilon$-reliable codebooks yields \eqref{eq:support_info_upper_capacity}.
\end{proof}

A closed-form universal relaxation follows by enlarging $\mathcal{Q}_{\mathcal{A}}$. 
Every $\mathbf{Q}\in\mathcal{Q}_{\mathcal{A}}$ satisfies
\begin{equation}
\mathbf{Q}\succeq\mathbf{0},
~~
\operatorname{tr}(\mathbf{Q})=1.
\end{equation}
Therefore,
\begin{align*}
C_{\rm snap}^{\rm ub}(\mathcal{A})
&\leq
\sup_{\mathbf{Q}\succeq 0,\,\operatorname{tr}(\mathbf{Q})=1}
\log_2
\frac{
\det(\mathbf{I}_M+\gamma_0\mathbf{Q})
}{
1+\gamma_0
}\\
&=
M\log_2\left(1+\frac{\gamma_0}{M}\right)
-
\log_2(1+\gamma_0),
\end{align*}
where the last equality follows from the arithmetic-geometric mean inequality and is attained by $\mathbf{Q}=\frac{1}{M}\mathbf{I}_M$ in the relaxed set. 
This yields the following closed-form corollary.

\begin{cor}[Universal information upper bound]\label{cor:universal_info_upper}
The $\epsilon$-capacity of embodied communication satisfies
\begin{equation}\label{eq:universal_info_upper_capacity}
C_{\epsilon}(L)
\leq
\frac{
C_{\rm snap}^{\rm univ}
+
\frac{1}{L}h_2(\epsilon)
}{
(1-\epsilon)T_p
},
\end{equation}
where $C_{\rm snap}^{\rm univ}
=
M\log_2\left(1+\frac{\gamma_0}{M}\right)
-
\log_2(1+\gamma_0)$.
\end{cor}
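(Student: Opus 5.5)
The corollary is a direct consequence of Theorem~\ref{thm:support_fano_upper}, so the plan is to show $C_{\rm snap}^{\rm ub}(\mathcal{A})\le C_{\rm snap}^{\rm univ}$ and substitute into \eqref{eq:support_info_upper_capacity}. The right-hand side of \eqref{eq:support_info_upper_capacity} is increasing in $C_{\rm snap}^{\rm ub}(\mathcal{A})$, because the denominator $(1-\epsilon)T_p$ is positive. Hence any upper bound on the support-constrained single-snapshot term passes directly to the capacity bound.

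\emph{Step 1: enlarge the constraint set.} Every generator $\mathbf{a}(\mathbf{r})\mathbf{a}^H(\mathbf{r})$ of $\mathcal{Q}_{\mathcal{A}}$ is Hermitian positive semidefinite. Each also has unit trace, since $\|\mathbf{a}(\mathbf{r})\|=1$ by the normalization of the steering vector \eqref{eq:steer}. Both properties are preserved under convex combinations, so $\mathcal{Q}_{\mathcal{A}}\subseteq\{\mathbf{Q}\succeq\mathbf{0},\operatorname{tr}(\mathbf{Q})=1\}$. A supremum over a larger set can only increase, which gives
\begin{equation*}
C_{\rm snap}^{\rm ub}(\mathcal{A})\le \sup_{\mathbf{Q}\succeq\mathbf{0},\,\operatorname{tr}(\mathbf{Q})=1}\log_2\frac{\det(\mathbf{I}_M+\gamma_0\mathbf{Q})}{1+\gamma_0}.
\end{equation*}

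\emph{Step 2: solve the relaxed problem.} Let $\lambda_1,\ldots,\lambda_M\ge 0$ be the eigenvalues of $\mathbf{Q}$, with $\sum_m\lambda_m=1$. Then $\det(\mathbf{I}_M+\gamma_0\mathbf{Q})=\prod_m(1+\gamma_0\lambda_m)$. By the AM--GM inequality this product is at most $\bigl(\frac1M\sum_m(1+\gamma_0\lambda_m)\bigr)^M=(1+\gamma_0/M)^M$, with equality at $\mathbf{Q}=\frac1M\mathbf{I}_M$. Equivalently, this follows from concavity of $\log$. Taking $\log_2$ and subtracting $\log_2(1+\gamma_0)$ yields exactly $C_{\rm snap}^{\rm univ}$.

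\emph{Step 3: conclude.} Substituting $C_{\rm snap}^{\rm ub}(\mathcal{A})\le C_{\rm snap}^{\rm univ}$ into \eqref{eq:support_info_upper_capacity} gives \eqref{eq:universal_info_upper_capacity}.

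No step is a genuine obstacle. The one point to check carefully is the set inclusion in Step 1, specifically the unit-trace normalization that makes the trace constraint exact. It also helps to note that the relaxed maximizer $\frac1M\mathbf{I}_M$ need not lie in $\mathcal{Q}_{\mathcal{A}}$. This does not affect validity, because only an upper bound is needed; it explains why the universal bound may be loose when $\mathcal{A}$ spans a narrow angular sector.
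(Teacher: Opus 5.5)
Your proposal is correct and follows the paper's argument: the paper also relaxes $\mathcal{Q}_{\mathcal{A}}$ to the set of positive semidefinite unit-trace matrices and evaluates the relaxed supremum with the AM--GM inequality, attained at $\mathbf{Q}=\frac{1}{M}\mathbf{I}_M$. It then substitutes the result into Theorem~\ref{thm:support_fano_upper}, and your checks of the unit-trace normalization and of monotonicity in $C_{\rm snap}^{\rm ub}(\mathcal{A})$ just make explicit steps the paper leaves implicit.
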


The universal bound in Corollary~\ref{cor:universal_info_upper} is easy to evaluate but may be loose when the angular support of $\mathcal{A}$ occupies only a small portion of the BS field of view. 
The support-constrained bound in Theorem~\ref{thm:support_fano_upper} is generally tighter because it only allows covariance mixtures generated by physically feasible scatterer positions.

\subsection{Geometric Packing Converse}\label{subsec:geo_converse}

The information-theoretic converse limits the amount of information that can be extracted from the sensing observations. 
We next derive a complementary converse based on the physical geometry of the agent plane. 
The intuition is: if two physical positions are too close in the sensing geometry, then no binary detector can distinguish them with sufficiently small error. 
Thus, any maximum-error reliable codebook must have a minimum physical separation, which limits the number of codewords that can be placed in $\mathcal{A}$.

We first establish a necessary pairwise condition.

\begin{lem}[Necessary pairwise Bhattacharyya separation]\label{lem:necessary_pairwise}
Consider two codeword positions $\mathbf{r}_i$ and $\mathbf{r}_j$ with displacement $\boldsymbol{\Delta}_{ij}$.
Suppose that the BS can distinguish these two positions from $L$ snapshots with error probability at most $\epsilon$ under both hypotheses. 
Their single-snapshot Bhattacharyya distance must satisfy
\begin{equation}\label{eq:Bnec_condition}
B(\boldsymbol{\Delta}_{ij})
\geq
B_{\rm nec}(\epsilon,L),
\end{equation}
where $B_{\rm nec}(\epsilon,L)
=
\frac{1}{2L}
\log
\frac{1}{4\epsilon(1-\epsilon)}$.
\end{lem}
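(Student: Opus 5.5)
Our plan is to prove the lemma through the standard lower bound on binary testing error in terms of the Bhattacharyya coefficient. Let $P$ and $Q$ denote the $L$-snapshot laws $p(\mathbf{Y}\mid \mathbf{r}_i)$ and $p(\mathbf{Y}\mid \mathbf{r}_j)$, and let $\mathrm{BC}(P,Q)=\int\sqrt{pq}$ be their Bhattacharyya coefficient. The snapshots are conditionally independent, so the coefficient factorizes across snapshots. The single-snapshot coefficient is $\exp(-B(\boldsymbol{\Delta}_{ij}))$ by \eqref{eq:Bij_general} and Theorem~\ref{thm:pairwise_error}. Hence
\begin{equation*}
\mathrm{BC}(P,Q)=\exp\left[-L B(\boldsymbol{\Delta}_{ij})\right].
\end{equation*}

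Next, fix an arbitrary test with acceptance region $\mathcal{D}$ for $\mathcal{H}_i$. Define the error probabilities $\alpha=P(\mathcal{D}^c)\le\epsilon$ and $\beta=Q(\mathcal{D})\le\epsilon$. We split the Bhattacharyya integral over $\mathcal{D}$ and $\mathcal{D}^c$ and apply Cauchy--Schwarz on each piece:
\begin{equation*}
\mathrm{BC}(P,Q)\le\sqrt{P(\mathcal{D})Q(\mathcal{D})}+\sqrt{P(\mathcal{D}^c)Q(\mathcal{D}^c)}=\sqrt{(1-\alpha)\beta}+\sqrt{\alpha(1-\beta)}.
\end{equation*}
A randomized decoder reduces to this case by the same computation with soft indicators, or by viewing the randomization as part of the observation.

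The main step is to maximize the right-hand side over the constraint set $\alpha,\beta\le\epsilon$. We assume $\epsilon<1/2$, since otherwise the claimed $B_{\rm nec}$ is nonpositive or undefined and the lemma is vacuous. The function $f(\alpha,\beta)=\sqrt{(1-\alpha)\beta}+\sqrt{\alpha(1-\beta)}$ is the fidelity between the Bernoulli distributions $(\alpha,1-\alpha)$ and $(1-\beta,\beta)$. It increases in each argument while both arguments are below $1/2$. We would verify this by differentiation: $\partial f/\partial\beta\ge0$ is equivalent to $(1-\alpha)(1-\beta)\ge\alpha\beta$, which holds for $\alpha,\beta<1/2$. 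The maximum is therefore attained at $\alpha=\beta=\epsilon$, giving $f\le 2\sqrt{\epsilon(1-\epsilon)}$.

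Combining the two displays yields
\begin{equation*}
\exp\left[-LB(\boldsymbol{\Delta}_{ij})\right]\le 2\sqrt{\epsilon(1-\epsilon)}.
\end{equation*}
Taking logarithms and dividing by $L$ gives $B(\boldsymbol{\Delta}_{ij})\ge \frac{1}{2L}\log\frac{1}{4\epsilon(1-\epsilon)}$, which is \eqref{eq:Bnec_condition}. The monotonicity check is the only delicate point. Everything else reduces to the multiplicativity of the Bhattacharyya coefficient and the closed form in Theorem~\ref{thm:pairwise_error}.
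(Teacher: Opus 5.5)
Your argument is correct, but it takes a different route from the paper. The paper goes through the equal-prior Bayes error. It writes $P_{\rm b}^{\star}=\frac12\bigl(1-\|P_i^{(L)}-P_j^{(L)}\|_{\rm TV}\bigr)$ and bounds the total variation by $\sqrt{1-e^{-2LB(\boldsymbol{\Delta}_{ij})}}$. It then notes that a test with both errors at most $\epsilon$ has average error at most $\epsilon$, and rearranges. You instead apply Cauchy--Schwarz on the acceptance region and its complement, which is the data-processing inequality for the Bhattacharyya coefficient under the binary quantization induced by the test. You then maximize over the two-sided error region.

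The paper's route is shorter and needs no monotonicity check. It also covers randomized tests for free, because the Bayes error is already optimal over all tests. However, it collapses the two error probabilities into their average. Your per-test bound is self-contained and never weaker: with $\bar P=(\alpha+\beta)/2$ one has $4\bar P(1-\bar P)-f(\alpha,\beta)^2=\bigl(\sqrt{\alpha(1-\alpha)}-\sqrt{\beta(1-\beta)}\bigr)^2\ge 0$. So it would give sharper conditions under asymmetric error targets. Under the symmetric constraint of the lemma, both routes give the same threshold, because the extremal point is $\alpha=\beta=\epsilon$.

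One correction: your reason for assuming $\epsilon<1/2$ is wrong. Since $4\epsilon(1-\epsilon)\le 1$ on $(0,1)$, $B_{\rm nec}(\epsilon,L)$ is never negative. For $\epsilon\in(1/2,1)$ it is strictly positive, since it equals $B_{\rm nec}(1-\epsilon,L)$. In that range the lemma is not vacuous but false. A test that ignores $\mathbf{Y}$ and flips a fair coin has error $1/2\le\epsilon$ under both hypotheses, no matter how small $B(\boldsymbol{\Delta}_{ij})$ is.

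So $\epsilon\le 1/2$ is a genuine hypothesis, with the case $\epsilon=1/2$ trivial. The paper uses it tacitly when squaring $\sqrt{1-e^{-2LB}}\ge 1-2\epsilon$, and states it explicitly in Theorem~\ref{thm:geometric_upper}. In your proof it enters exactly at the monotonicity step, since $\partial f/\partial\alpha$ and $\partial f/\partial\beta$ are nonnegative precisely on $\{\alpha+\beta\le 1\}$.
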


\begin{proof}
Let $P_i^{(L)}$ and $P_j^{(L)}$ denote the $L$-snapshot distributions induced by $\mathbf{r}_i$ and $\mathbf{r}_j$, respectively. 
Their Bhattacharyya coefficient is $\beta_{ij}^{(L)}
=
\exp[-L B(\boldsymbol{\Delta}_{ij})]$.
For equal priors, the minimum binary Bayes error probability satisfies
\begin{equation*}
P_{\rm b}^{\star}
=
\frac{1}{2}
\left(
1-
\|P_i^{(L)}-P_j^{(L)}\|_{\rm TV}
\right),
\end{equation*}
where $\|\cdot\|_{\rm TV}$ denotes total variation distance. 
Using the standard relation between total variation distance and Bhattacharyya coefficient,
\begin{equation*}
\|P_i^{(L)}-P_j^{(L)}\|_{\rm TV}
\leq
\sqrt{
1-
\left(\beta_{ij}^{(L)}\right)^2
},
\end{equation*}
we obtain
\begin{equation*}
P_{\rm b}^{\star}
\geq
\frac{1}{2}
\left(
1-
\sqrt{
1-\exp[-2L B(\boldsymbol{\Delta}_{ij})]
}
\right).
\end{equation*}
If a binary test has error probability no larger than $\epsilon$ under both hypotheses, then its equal-prior average error probability is also no larger than $\epsilon$. 
Therefore, it is necessary that
\begin{equation*}
\frac{1}{2}
\left(
1-
\sqrt{
1-\exp[-2L B(\boldsymbol{\Delta}_{ij})]
}
\right)
\leq
\epsilon.
\end{equation*}
Rearranging gives $\exp[-2L B(\boldsymbol{\Delta}_{ij})]
\leq
4\epsilon(1-\epsilon)$,
which is equivalent to \eqref{eq:Bnec_condition}.
\end{proof}

Lemma~\ref{lem:necessary_pairwise} states that reliable pairwise discrimination requires a minimum amount of sensing distinguishability. 
Accordingly, define the necessary forbidden region
\begin{equation}\label{eq:Fnec_def}
\mathcal{F}_{\rm nec}(\epsilon,L)
=
\left\{
\boldsymbol{\Delta}\in\mathbb{R}^2:
B(\boldsymbol{\Delta})
<
B_{\rm nec}(\epsilon,L)
\right\}.
\end{equation}
Any pair of codewords in an $\epsilon$-reliable maximum-error codebook must have a displacement outside $\mathcal{F}_{\rm nec}(\epsilon,L)$.

The exact shape of $\mathcal{F}_{\rm nec}(\epsilon,L)$ is generally anisotropic and may be nonconvex because of array sidelobes. 
For an explicit geometric upper bound, we extract from it a Euclidean necessary separation. 
Define
\begin{equation*}
d_{\rm nec}(\epsilon,L)
=
\sup
\left\{
d\geq 0:
\{\boldsymbol{\Delta}:\|\boldsymbol{\Delta}\|_2\leq d\}
\subseteq
\mathcal{F}_{\rm nec}(\epsilon,L)
\right\}.
\end{equation*}
Thus, $d_{\rm nec}(\epsilon,L)$ is the radius of the largest Euclidean ball around the origin that is entirely contained in the necessary forbidden region. 

\begin{thm}[Geometric packing upper bound]\label{thm:geometric_upper}
Assume $\epsilon<1/2$. 
The embodied $\epsilon$-capacity satisfies
\begin{eqnarray*}
&& C_{\epsilon}(L)
\leq
\frac{1}{LT_p}
\log_2
\Bigg[
\frac{4}{\pi d_{\rm nec}^2(\epsilon,L)}
\Big(
a_y a_z \\
&&\hspace{1.5cm} 
+
(a_y+a_z)d_{\rm nec}(\epsilon,L)
+
\frac{\pi d_{\rm nec}^2(\epsilon,L)}{4}
\Big)
\Bigg].
\end{eqnarray*}
\end{thm}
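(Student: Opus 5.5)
The argument turns the pairwise necessary condition of Lemma~\ref{lem:necessary_pairwise} into a Euclidean separation, and then counts how many such separated points fit in $\mathcal{A}$ with a disk-packing (area) argument using the Minkowski sum of $\mathcal{A}$ with a disk.

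First, fix any $J$-position codebook $\mathfrak{C}\subseteq\mathcal{A}$ with $P_{\max}^{\star}(\mathfrak{C})\le\epsilon$, and let $g$ be a decoder attaining (or approaching) this bound. For any pair $i\neq j$, define a binary test from $g$: declare $i$ if $g(\mathbf{Y})=i$, and $j$ otherwise. Under $\mathcal{H}_i$ its error is at most $\Pr(g\neq i\mid W=i)\le\epsilon$. Under $\mathcal{H}_j$ an error requires $g(\mathbf{Y})=i\neq j$, so it is at most $\epsilon$ as well. (If the infimum is not attained, apply the argument to $\epsilon'>\epsilon$ and let $\epsilon'\downarrow\epsilon$; by continuity of $B_{\rm nec}$ and of the final bound in $d_{\rm nec}$ this is a technicality.) Lemma~\ref{lem:necessary_pairwise} then gives $B(\mathbf{r}_i-\mathbf{r}_j)\ge B_{\rm nec}(\epsilon,L)$, so $\mathbf{r}_i-\mathbf{r}_j\notin\mathcal{F}_{\rm nec}(\epsilon,L)$.

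Since $\mathcal{F}_{\rm nec}$ contains the closed ball of radius $d$ for every $d<d_{\rm nec}$, we get $\|\mathbf{r}_i-\mathbf{r}_j\|_2> d$ for all such $d$, hence $\|\mathbf{r}_i-\mathbf{r}_j\|_2\ge d_{\rm nec}$. The assumption $\epsilon<1/2$ guarantees $4\epsilon(1-\epsilon)<1$, so $B_{\rm nec}>0$. Because $B(\mathbf{0})=0$ and $B$ is continuous, $\mathcal{F}_{\rm nec}$ is then an open neighborhood of the origin and $d_{\rm nec}>0$, which makes the bound nontrivial.

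Next comes the packing step. Place open disks of radius $d_{\rm nec}/2$ around each codeword. By the separation just shown, these disks are pairwise disjoint. Each disk lies in the Minkowski sum $\mathcal{A}\oplus\mathcal{B}(d_{\rm nec}/2)$. By Steiner's formula for a convex set, this sum has area
\begin{equation*}
a_ya_z+(2a_y+2a_z)\frac{d_{\rm nec}}{2}+\pi\frac{d_{\rm nec}^2}{4},
\end{equation*}
where $2a_y+2a_z$ is the perimeter of $\mathcal{A}$. Comparing total areas gives
\begin{equation*}
J\,\frac{\pi d_{\rm nec}^2}{4}\le a_ya_z+(a_y+a_z)d_{\rm nec}+\frac{\pi d_{\rm nec}^2}{4}.
\end{equation*}
Solving for $J$ yields exactly the bracketed quantity in the statement. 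Taking the maximum over $\epsilon$-reliable codebooks bounds $J_\epsilon^\star(L)$, and applying $\frac{1}{LT_p}\log_2$ gives the claim via \eqref{eq:C_epsilon_def}.

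The main obstacle is the first step: making sure that the maximum-error guarantee of a $J$-ary decoder really induces a binary test satisfying the hypothesis of Lemma~\ref{lem:necessary_pairwise}, including the attainment issue for the infimum in \eqref{eq:pmax_star_def}. The remaining steps are the standard Steiner/area computation.
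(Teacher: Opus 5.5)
Your proposal is correct and follows the paper's proof step for step. It uses the same reduction of the $J$-ary decoder to a binary test, the same appeal to Lemma~\ref{lem:necessary_pairwise} to get a Euclidean separation via $d_{\rm nec}(\epsilon,L)$, and the same disk-packing area count in $\mathcal{A}\oplus\mathcal{B}(0,d_{\rm nec}/2)$. Your extra care only tightens details the paper glosses over: you conclude $\|\mathbf{r}_i-\mathbf{r}_j\|_2\ge d_{\rm nec}$ rather than the paper's strict inequality, use open disks, and handle possible non-attainment of the infimum in \eqref{eq:pmax_star_def}.
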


\begin{proof}
Consider any $\epsilon$-reliable embodied codebook $\mathfrak{C}$ with a decoder $g$ such that $\Pr(g(\mathbf{Y})\neq i|W=i)
\leq
\epsilon$, $\forall i$.
The multi-hypothesis decoder $g$ can be converted into a binary test between hypotheses $i$ and $j$ by declaring hypothesis $i$ if $g(\mathbf{Y})=i$, and declaring hypothesis $j$ otherwise. 
Under hypothesis $i$, this binary test makes an error only when the original decoder fails to output $i$, so its error probability is at most $\epsilon$. 
Under hypothesis $j$, the binary test declares $i$ only when the original decoder outputs $i$, which is contained in the event that the original decoder fails to output $j$. 
Thus, its error probability under hypothesis $j$ is also at most $\epsilon$.

By Lemma~\ref{lem:necessary_pairwise}, every pair of distinct codewords must satisfy
\begin{equation*}
B(\mathbf{r}_i-\mathbf{r}_j)
\geq
B_{\rm nec}(\epsilon,L).
\end{equation*}
That is, $\mathbf{r}_i-\mathbf{r}_j
\notin
\mathcal{F}_{\rm nec}(\epsilon,L)$, $\forall i\neq j$.

By the definition of $d_{\rm nec}(\epsilon,L)$, any displacement with Euclidean norm no larger than $d_{\rm nec}(\epsilon,L)$ lies inside $\mathcal{F}_{\rm nec}(\epsilon,L)$. 
Therefore,
\begin{equation*}
\|\mathbf{r}_i-\mathbf{r}_j\|_2
>
d_{\rm nec}(\epsilon,L),
\qquad
\forall i\neq j.
\end{equation*}

Now place a disk of radius $d_{\rm nec}(\epsilon,L)/2$ centered at each codeword. 
These disks are non-overlapping. 
Since the codeword centers lie in the rectangle $\mathcal{A}$, all disks are contained in the expanded region
\begin{equation*}
\mathcal{A}\oplus\mathcal{B}
\left(
0,\frac{d_{\rm nec}(\epsilon,L)}{2}
\right),
\end{equation*}
where $\oplus$ denotes the Minkowski sum \cite{varadhan2004accurate}. 
The area of this expanded region is
\begin{equation*}
a_y a_z
+
(a_y+a_z)d_{\rm nec}(\epsilon,L)
+
\frac{\pi d_{\rm nec}^2(\epsilon,L)}{4}.
\end{equation*}
Each disk has area $\frac{\pi d_{\rm nec}^2(\epsilon,L)}{4}$.
Because the disks are non-overlapping, their total area cannot exceed the area of the expanded region. 
Therefore,
\begin{equation*}
J
\leq
\frac{
a_y a_z
+
(a_y+a_z)d_{\rm nec}(\epsilon,L)
+
\frac{\pi d_{\rm nec}^2(\epsilon,L)}{4}
}{
\frac{\pi d_{\rm nec}^2(\epsilon,L)}{4}
}.
\end{equation*}
Taking $\log_2(\cdot)/(LT_p)$ and maximizing over all $\epsilon$-reliable codebooks gives the upper bound.
\end{proof}

The bound in Theorem~\ref{thm:geometric_upper} is explicit once $d_{\rm nec}(\epsilon,L)$ is known. 
It is conservative because it uses the largest Euclidean ball contained in the necessary forbidden region rather than the full anisotropic shape of $\mathcal{F}_{\rm nec}(\epsilon,L)$. 
We now give a closed-form approximation for $d_{\rm nec}(\epsilon,L)$ under the main-lobe model of Lemma~\ref{lem:quadratic_B}.

\begin{cor}[Main-lobe geometric upper bound]\label{cor:ml_geo_upper}
Let $\alpha_{\max}=\max\{\alpha_y,\alpha_z\}$.
Under the main-lobe quadratic approximation, the necessary Euclidean separation is approximated by
\begin{equation}\label{eq:dnec_ml}
d_{\rm nec}^{\rm ml}(\epsilon,L)
\approx
\sqrt{
\frac{1}{
2\kappa L\alpha_{\max}
}
\log
\frac{1}{4\epsilon(1-\epsilon)}
}.
\end{equation}

Substituting $d_{\rm nec}^{\rm ml}(\epsilon,L)$ into the geometric packing upper bound gives the closed-form approximation
\begin{eqnarray*}\label{eq:geo_upper_ml_closed}
&& C_{\epsilon}(L)
\lesssim
\frac{1}{LT_p}
\log_2
\Bigg[
1
+
\frac{4(a_y+a_z)}{\pi}
\sqrt{
\frac{
\Omega_{\max}L
}{
\log\frac{1}{4\epsilon(1-\epsilon)}
}
}
+ \\
&&\hspace{2cm}
\frac{4a_y a_z}{\pi}
\frac{
\Omega_{\max}L
}{
\log\frac{1}{4\epsilon(1-\epsilon)}
}
\Bigg],
\end{eqnarray*}
where $M_{\max}
\!\triangleq\!
\max\{M_y^2\!-\! 1,M_z^2\!-\! 1\}$
and $\Omega_{\max}
\triangleq
\frac{
\pi^2\gamma_0^2 M_{\max}
}{
24D^2(1+\gamma_0)
}$.
\end{cor}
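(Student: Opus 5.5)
The plan is to specialize the geometric packing converse of Theorem~\ref{thm:geometric_upper} by computing $d_{\rm nec}(\epsilon,L)$ explicitly under the main-lobe model of Lemma~\ref{lem:quadratic_B}. The result then follows by substituting this value into the bound of Theorem~\ref{thm:geometric_upper} and simplifying.

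\emph{Step 1: the necessary forbidden region becomes an ellipse.} Replace $B(\boldsymbol{\Delta})$ in \eqref{eq:Fnec_def} by the quadratic form \eqref{eq:B_quadratic_matrix}. This gives
\begin{equation*}
\mathcal{F}_{\rm nec}^{\rm ml}(\epsilon,L)=\left\{\boldsymbol{\Delta}:\kappa(\alpha_y\Delta y^2+\alpha_z\Delta z^2)<B_{\rm nec}(\epsilon,L)\right\}.
\end{equation*}
This is an axis-aligned ellipse with semi-axes $\sqrt{B_{\rm nec}/(\kappa\alpha_y)}$ and $\sqrt{B_{\rm nec}/(\kappa\alpha_z)}$.

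\emph{Step 2: the largest centered disk inside the ellipse.} The largest Euclidean disk centered at the origin and contained in this ellipse has radius equal to the minor semi-axis. To see this, note that for $\|\boldsymbol{\Delta}\|_2=d$,
\begin{equation*}
\kappa(\alpha_y\Delta y^2+\alpha_z\Delta z^2)\le \kappa\alpha_{\max}d^2,
\end{equation*}
with equality along the axis of $\alpha_{\max}$. Hence
\begin{equation*}
d_{\rm nec}^{\rm ml}=\sqrt{B_{\rm nec}/(\kappa\alpha_{\max})}.
\end{equation*}
Inserting $B_{\rm nec}=\frac{1}{2L}\log\frac{1}{4\epsilon(1-\epsilon)}$ from Lemma~\ref{lem:necessary_pairwise} gives \eqref{eq:dnec_ml}.

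\emph{Step 3: substitution and simplification.} Expand the packing bound of Theorem~\ref{thm:geometric_upper} as
\begin{equation*}
\frac{4}{\pi d^2}\left(a_ya_z+(a_y+a_z)d+\frac{\pi d^2}{4}\right)=1+\frac{4(a_y+a_z)}{\pi d}+\frac{4a_ya_z}{\pi d^2}.
\end{equation*}
Next compute the constant. Using the definitions of $\kappa$ and $\alpha_{y},\alpha_z$,
\begin{equation*}
2\kappa\alpha_{\max}=2\cdot\frac{\gamma_0^2}{4(1+\gamma_0)}\cdot\frac{\pi^2M_{\max}}{12D^2}=\Omega_{\max},
\end{equation*}
so that
\begin{equation*}
\frac{1}{d^2}=\frac{\Omega_{\max}L}{\log\frac{1}{4\epsilon(1-\epsilon)}}.
\end{equation*}
Substituting $1/d$ and $1/d^2$ into the expansion yields exactly the stated closed form.

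The main obstacle is justifying that the quadratic ellipse is the right object to use for $d_{\rm nec}$. Theorem~\ref{thm:geometric_upper} needs a disk contained in the \emph{true} region $\mathcal{F}_{\rm nec}$, but Lemma~\ref{lem:quadratic_B} is only a second-order approximation. The approximations $\log(1+x)\approx x$ and $\eta\approx 1-\alpha_y\Delta y^2-\alpha_z\Delta z^2$ are not one-sided bounds. Therefore the conclusion can only be claimed as an approximation ($\lesssim$), valid when $d_{\rm nec}^{\rm ml}$ lies well inside the main lobe, i.e., for large $L$ or moderate $\epsilon$.

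Sidelobes do not harm this step. They only add extra forbidden displacements away from the origin, and the definition of $d_{\rm nec}$ concerns only a ball around the origin. I would therefore state explicitly that the result inherits the main-lobe regime assumption of Lemma~\ref{lem:quadratic_B}, rather than attempt a rigorous one-sided bound.
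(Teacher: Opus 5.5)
Your proof is correct and follows the paper's argument. Like the paper, you bound the quadratic form on the circle $\|\boldsymbol{\Delta}\|_2=d$ by $\kappa\alpha_{\max}d^2$ to obtain $d_{\rm nec}^{\rm ml}$; your Step~3 only writes out the substitution into Theorem~\ref{thm:geometric_upper} (including $2\kappa\alpha_{\max}=\Omega_{\max}$) that the paper leaves implicit.

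Your closing caveat is wrong, though. Both approximations err in the favorable direction. First, $\log(1+x)\le x$. Second, writing $\eta_y=\frac{1}{M_y^2}\sum_{|k|<M_y}(M_y-|k|)\cos(k\pi\Delta y/D)$ and using $\cos t\ge 1-t^2/2$ gives $\eta_y\ge 1-\alpha_y\Delta y^2$, and likewise for $z$. Since $0\le\eta_y,\eta_z\le 1$, this yields $1-\eta\le\alpha_y\Delta y^2+\alpha_z\Delta z^2$. Hence $B(\boldsymbol{\Delta})\le\kappa\alpha_{\max}\|\boldsymbol{\Delta}\|_2^2$ holds globally, so $d_{\rm nec}\ge d_{\rm nec}^{\rm ml}$. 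The packing bound decreases in $d$, so the closed form is a rigorous upper bound within the Dirichlet-kernel model of Lemma~\ref{lem:steering_correlation}. The main-lobe regime affects only its tightness, not its validity.
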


\begin{proof}
Under the main-lobe approximation,
\begin{equation*}
B(\Delta y,\Delta z)
\approx
\kappa
\left(
\alpha_y\Delta y^2+\alpha_z\Delta z^2
\right).
\end{equation*}
For a fixed Euclidean radius $d$, the maximum value of this quadratic form over all directions is
\begin{equation*}
\max_{\|\boldsymbol{\Delta}\|_2=d}
B(\boldsymbol{\Delta})
\approx
\kappa\alpha_{\max}d^2.
\end{equation*}
If $\kappa\alpha_{\max}d^2 < B_{\rm nec}(\epsilon,L)$,
then even the best-resolved direction at radius $d$ cannot meet the necessary binary distinguishability threshold. 
Hence all displacements of norm no larger than $\sqrt{
\frac{
B_{\rm nec}(\epsilon,L)
}{
\kappa\alpha_{\max}
}
}$ 
belong to the necessary forbidden region under the quadratic approximation. 
This gives \eqref{eq:dnec_ml}. 
\end{proof}

For achievability, a conservative Euclidean design must protect against the worst-resolved direction of the BS sensing field. 
For the converse, however, a pair of positions can be ruled out only if even the best-resolved direction cannot provide enough distinguishability. 
This is why the closed-form necessary separation in \eqref{eq:dnec_ml} is governed by $\alpha_{\max}$. 
\section{Numerical Results}\label{sec:numerical}

\begin{figure}[t]
    \centering
    \includegraphics[width=1\columnwidth]{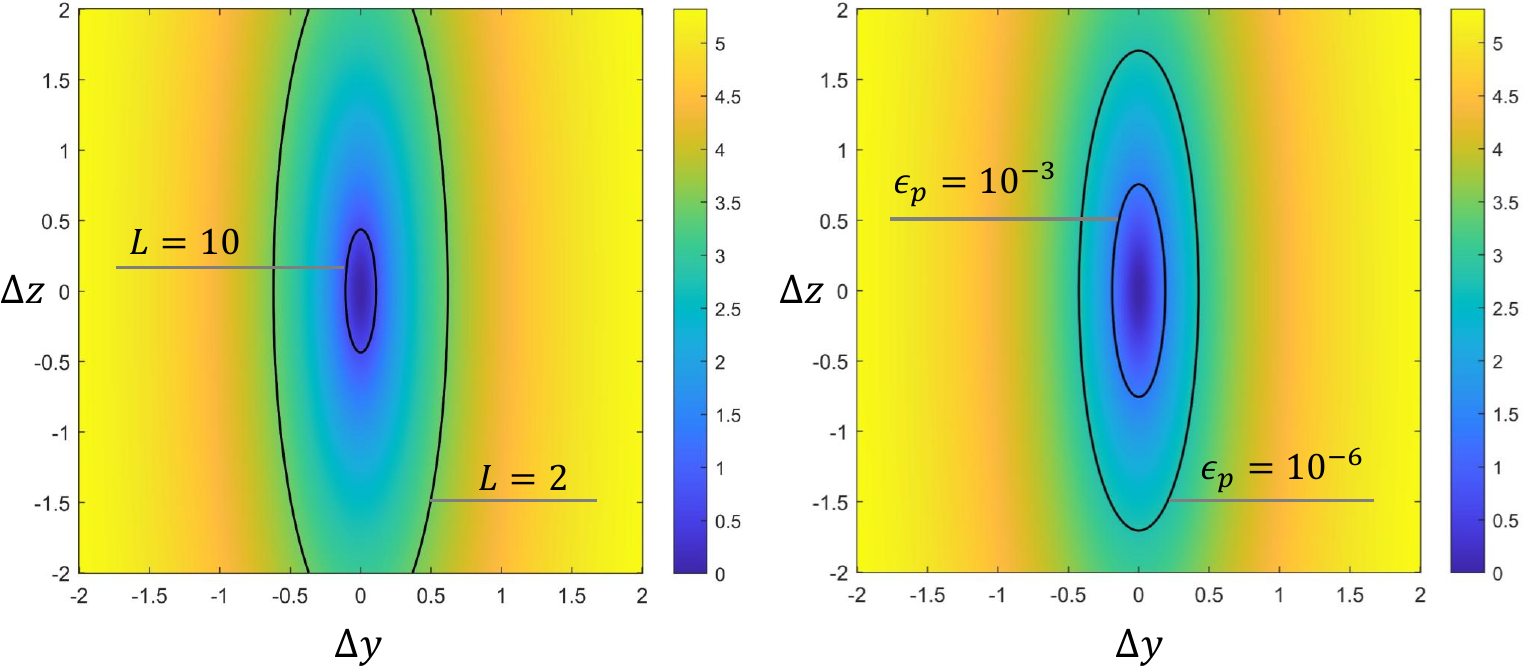}
    \caption{Sensing-induced reliability field $B(\boldsymbol{\Delta})$ determines the pairwise distinguishability of embodied symbols.}
    \label{fig:reliability_geometry}
\end{figure}

This section numerically evaluates the sensing-induced geometry and the achievable performance of embodied communication. 
Unless otherwise specified, the carrier frequency is $f_c=7$ GHz, the BS is equipped with a $64\times 16$ UPA, the BS-agent distance is $D=100$ m, the number of snapshots $L=5$, and the target maximum error probability is $\epsilon=10^{-3}$. 
The controllable agent region is a square with $a_y=a_z=2$ m. 
To remove the dependence on the particular probing-pulse duration, we report the rate in normalized units of bits per probing duration, i.e., $R=\frac{1}{L}\log_2 J$.

\subsection{Sensing-Induced Reliability Fields}

We first examine the reliability field $B(\boldsymbol{\Delta})$ in \eqref{eq:B_delta_def} induced by the BS sensing system on the agent plane. 
This field maps a physical displacement between two possible scatterer positions to the single-snapshot Bhattacharyya distance of the corresponding sensing distributions. 
It is therefore the basic geometric object that determines how physical positions can be used as embodied symbols.

Fig.~\ref{fig:reliability_geometry} shows the two-dimensional field $B(\boldsymbol{\Delta})$ together with the contours corresponding to different snapshot numbers $L$ and target pairwise error probabilities $\epsilon_{\rm p}$.
\begin{itemize}[leftmargin=0.5cm]
    \item For a fixed $\epsilon_{\rm p}$, increasing $L$ reduces the required single-snapshot exponent $B_{\rm req}=\frac{1}{L}\log\frac{1}{\epsilon_{\rm p}}$, and therefore shrinks the forbidden displacement region. More snapshots allow the BS to distinguish closer physical positions and hence support denser embodied codebooks.
    \item For a fixed $L$, decreasing $\epsilon_{\rm p}$ has the opposite effect: it requires a larger Bhattacharyya exponent, enlarges the forbidden region, and reduces the number of admissible codewords. This directly illustrates the operational meaning of the $\epsilon$-capacity in Section~\ref{sec:capacity}.
\end{itemize}

Fig.~\ref{fig:reliability_geometry2} further plots the pairwise error bound $\exp[-LB(\boldsymbol{\Delta})]$ as a function of Euclidean displacement along representative directions. 
The decay rates are direction-dependent, which means that two codeword pairs with the same Euclidean separation can have different sensing distinguishability.
This behavior is caused by the anisotropic UPA aperture and confirms that the relevant codebook geometry is not ordinary Euclidean distance, but the sensing-induced field $B(\boldsymbol{\Delta})$.

The directional polar plot in Fig.~\ref{fig:reliability_geometry2} provides another view of the same phenomenon by plotting $B(d\cos\psi,d\sin\psi)$ against the direction angle $\psi$ for a fixed displacement radius $d$. 
The anisotropic polar profile reflects the unequal resolution of the UPA along different directions. 
This is precisely why the lattice codebook in Section~\ref{sec:achievable} is designed in the transformed reliability domain rather than directly as an isotropic grid on the physical plane.

\begin{figure}[t]
    \centering
    \includegraphics[width=1\columnwidth]{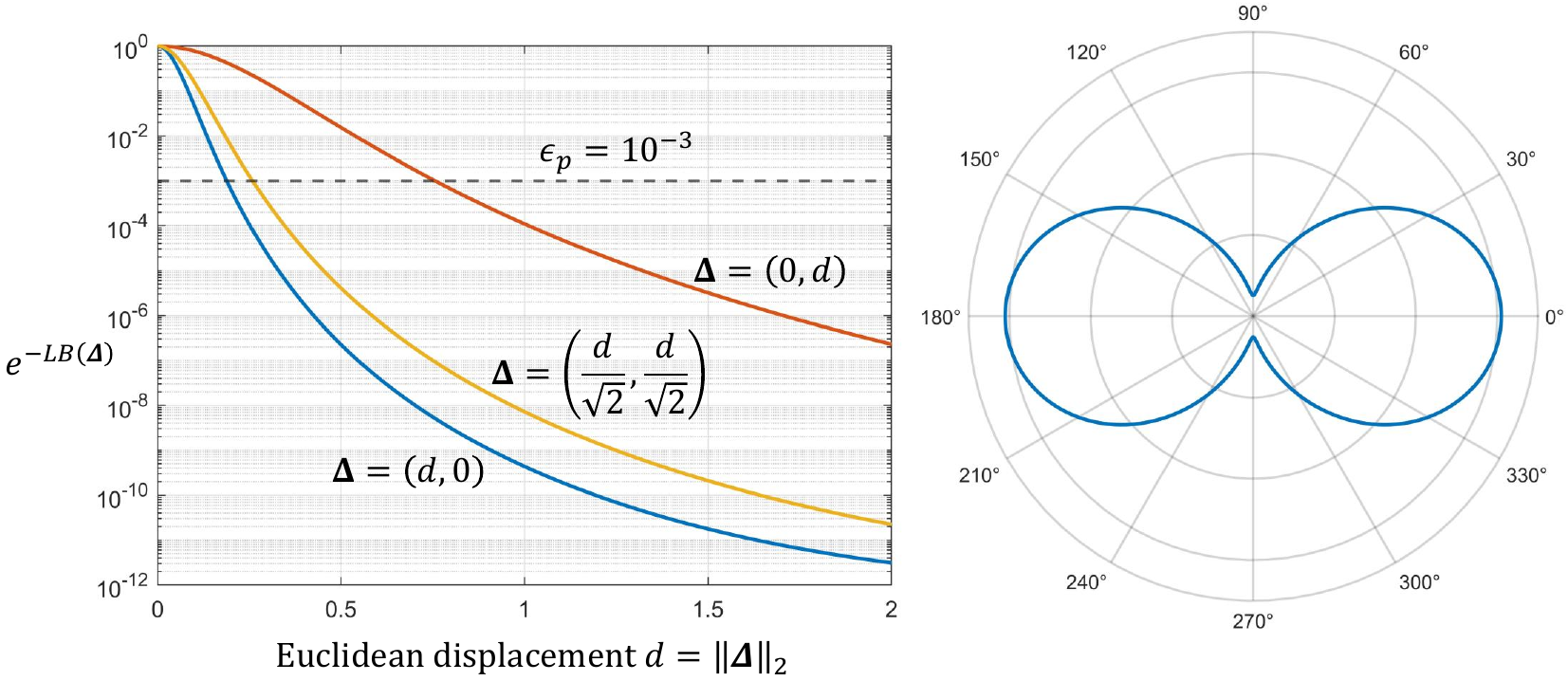}
    \caption{Directional pairwise error behavior. 
    The same Euclidean displacement may lead to different error exponents.}
    \label{fig:reliability_geometry2}
\end{figure}

\begin{figure}[t]
    \centering
    \includegraphics[width=0.6\columnwidth]{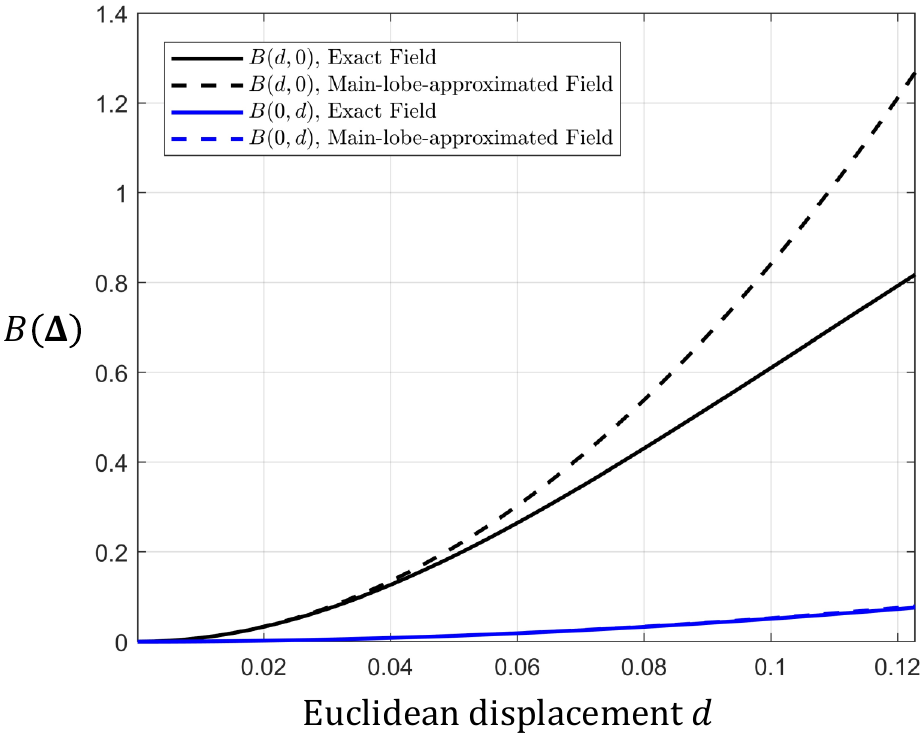}
    \caption{Exact Bhattacharyya distance versus the main-lobe quadratic approximation.}
    \label{fig:main_lobe_accuracy}
\end{figure}

We next validate the main-lobe approximation used to derive the closed-form lattice design in Section~\ref{sec:achievable}. 
Recall that the exact reliability field is induced by the UPA Dirichlet kernels, while the main-lobe approximation replaces it locally by the quadratic form $B(\boldsymbol{\Delta}) \approx \boldsymbol{\Delta}^{T}\mathbf{G}_{\rm B}\boldsymbol{\Delta}$.
This approximation underlies the transformed hexagonal lattice construction in Theorem~\ref{thm:hex_lattice_design}.

Fig.~\ref{fig:main_lobe_accuracy} compares the exact Bhattacharyya distance with its quadratic approximation along representative displacement directions. 
The approximation is accurate near the origin, which is the dominant regime for dense codebook packing. 
For larger displacements, the exact field deviates from the quadratic model because of sidelobes and nonlocal array-response effects.

\subsection{Achievable Rate and Converses}

We now evaluate the achievable rate of the proposed hexagonal lattice codebook and compare it with the converses established in Section~\ref{sec:upper_bounds}. 

\begin{figure}[t]
    \centering
    \includegraphics[width=1\columnwidth]{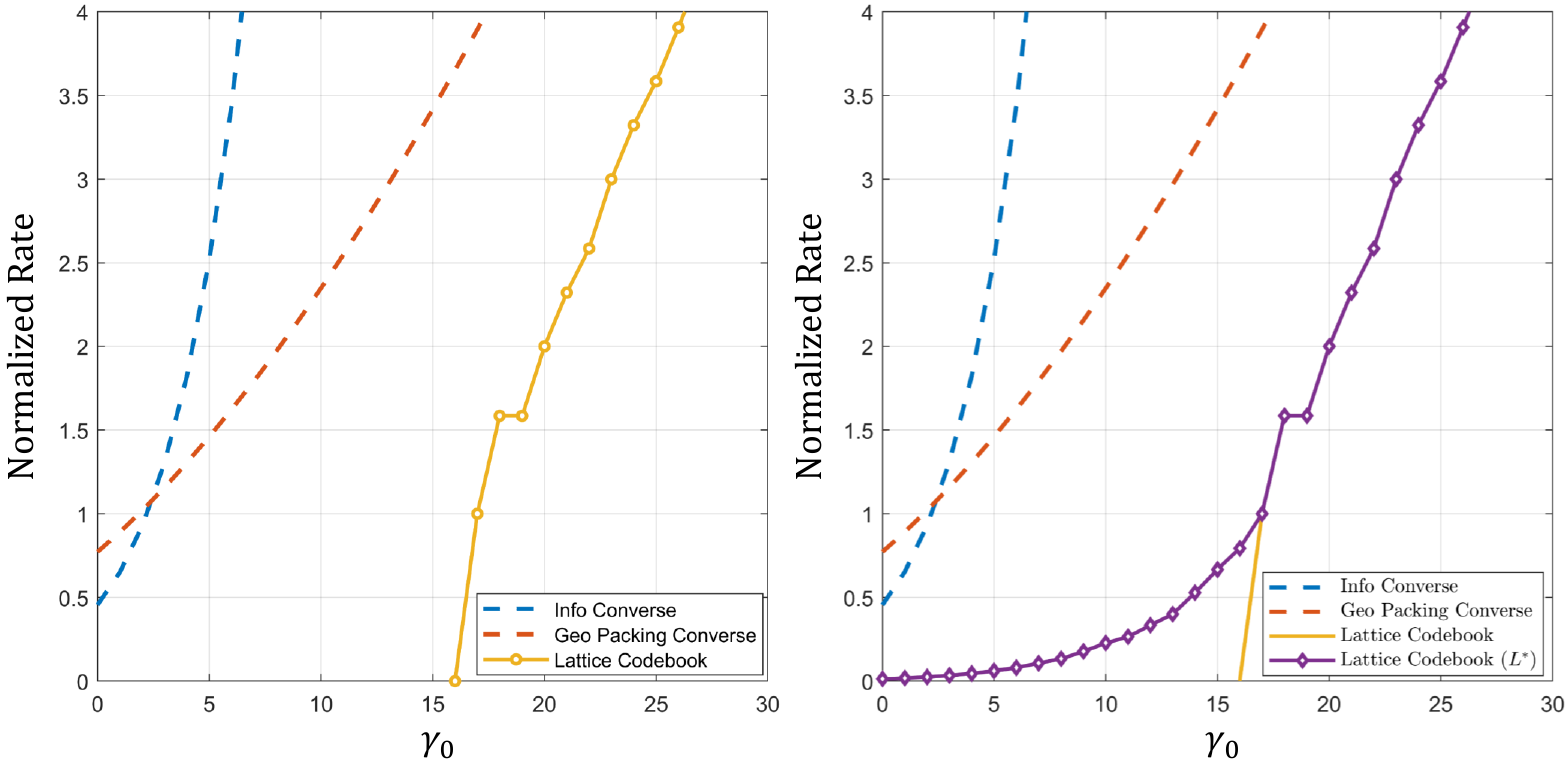}
    \caption{Achievable normalized rate versus sensing SNR, where $M_y\!=\! M_z\!=\! 64$ and $L\!=\! 1$. Optimizing over $L$ improves the rate by balancing sensing reliability and symbol duration.}
    \label{fig:rate_snr}
\end{figure}

Fig.~\ref{fig:rate_snr} contains two subplots. 
We first focus on the left subplot, where the number of snapshots is fixed as $L=1$. 
As the sensing SNR increases, the achievable rate increases. 
This is consistent with the analytical dependence of the reliability field on $\gamma_0$: a larger SNR increases the Bhattacharyya distance between nearby positions, reduces the effective forbidden region, and allows more physical positions to be reliably used as codewords.

However, the fixed-$L$ result only captures one operating point. 
As discussed in Section~\ref{subsec:L_optimization}, the number of snapshots itself is a key design variable in embodied communication. 
Increasing $L$ improves sensing reliability because pairwise error probabilities decay exponentially with $L$. 
At the same time, one embodied symbol occupies $LT_p$ seconds, so the normalized rate is penalized by the factor $1/L$. 
This creates a sensing-duration tradeoff: a larger $L$ can support a larger alphabet, but too large an $L$ reduces the rate per unit probing duration.

To make this tradeoff explicit, Fig.~\ref{fig:Lstar_snr} plots the normalized rate as a function of $L$ for several SNR values. 
As shown, 
\begin{itemize}[leftmargin=0.5cm]
    \item The variation of $L$ creates a multimodal rate curve and there exists an optimal sensing duration.
    \item Higher SNR shifts the optimum toward smaller $L$. This is because each snapshot is more informative at high SNR, hence fewer snapshots are sufficient to resolve a dense physical alphabet. At lower SNR, more snapshots are needed to accumulate enough sensing evidence.
    \item Since the rate is normalized by $1/L$, excessive sensing duration eventually reduces the rate. 
\end{itemize}

Fig.~\ref{fig:Lstar_snr} further plots the optimized snapshot number $L^\star$ versus SNR. 
As the SNR increases, fewer snapshots are needed to achieve the same level of position distinguishability. 
This agrees with the closed-form expression in Theorem~\ref{thm:optimal_L}, where a larger transformed-area parameter $\Xi_{\rm h}$ reduces the optimal sensing duration. 
We point out that small fluctuations of the optimized integer-valued $L^\star$ can occur because both $L$ and the supported alphabet size are discrete, and because finite-region truncation affects the realized lattice codebook.

With the role of $L$ clarified, we return to the right subplot of Fig.~\ref{fig:rate_snr}, where $L$ is optimized for each SNR. 
Compared with the fixed-$L$ case, optimizing the sensing duration improves the achievable rate, especially in the low-SNR regimes where a single snapshot is insufficient to resolve a dense alphabet.

Finally, Fig.~\ref{fig:codebook_layout} visualizes a generated hexagonal lattice codebook on the agent plane.
Because the BS-induced reliability metric is anisotropic, the resulting codebook is not an ordinary isotropic grid in the physical domain. 
Instead, it is stretched and compressed according to the sensing resolution of the UPA. 
In directions where the BS provides finer discrimination, the codewords can be placed more densely; in weaker directions, the spacing is larger.

This layout gives a direct physical interpretation of embodied communication. 
The agent plane becomes a physical alphabet, and the BS sensing system determines how finely that alphabet can be carved.

\begin{figure}[t]
    \centering
    \includegraphics[width=1\columnwidth]{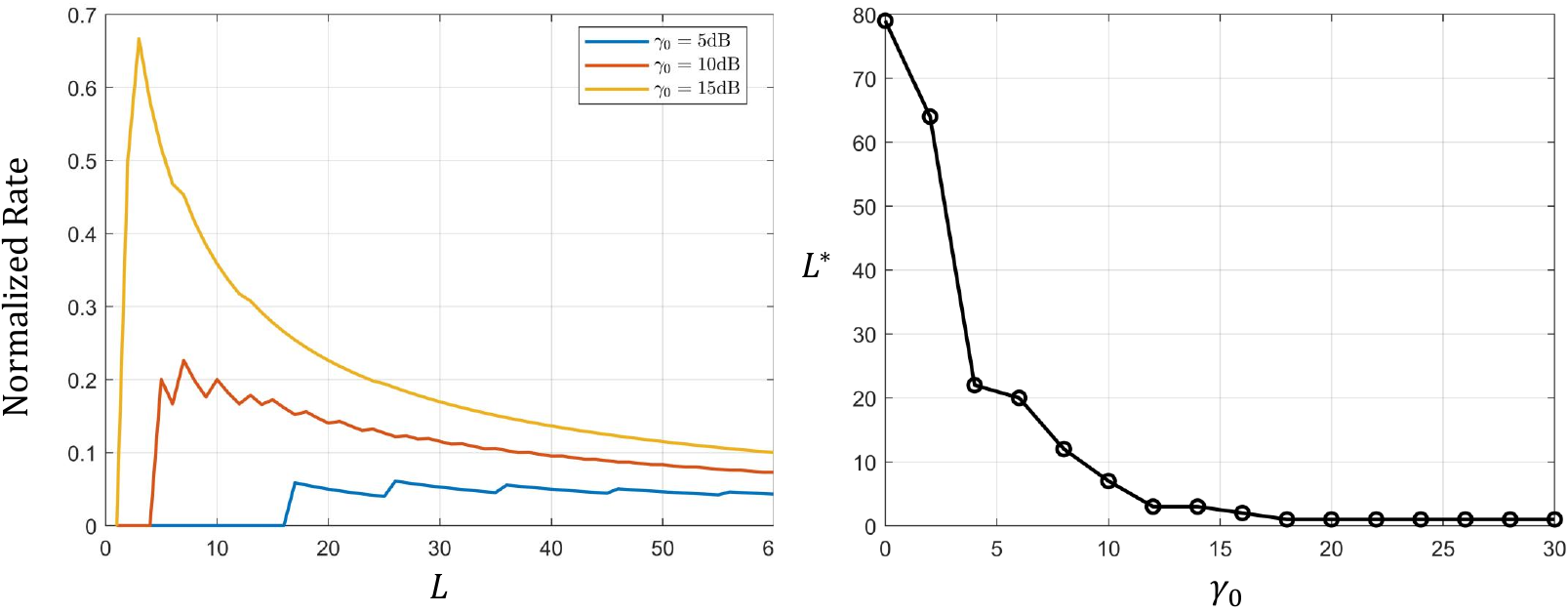}
    \caption{Normalized rate as a function of $L$, and the optimal snapshot number versus sensing SNR.}
    \label{fig:Lstar_snr}
\end{figure}

\begin{figure}[t]
    \centering
    \includegraphics[width=0.5\columnwidth]{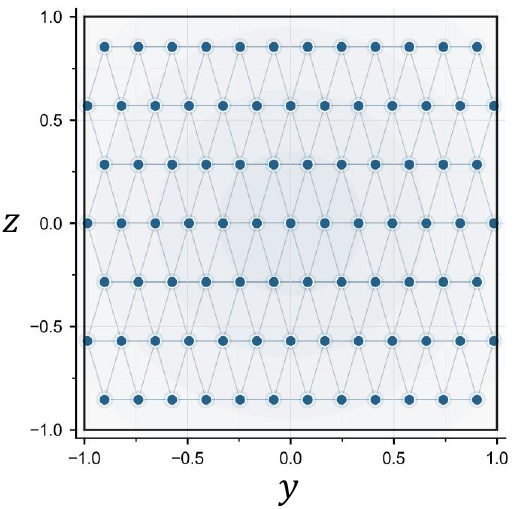}
    \caption{Example hexagonal lattice codebook on the agent plane, where $M_y=64$ and $M_z=32$.}
    \label{fig:codebook_layout}
\end{figure}

\section{Conclusion}\label{sec:conclusion}

This paper has positioned embodied communication as a new way to think about information exchange in sensing-rich environments. Its central implication is that communication need not be confined to transmitter-generated waveforms. When agents can intentionally shape the physical states around them, and when infrastructure can perceive those states, the environment itself becomes an information interface. This viewpoint expands the role of future infrastructure from a data pipe or sensing node to an interpreter of purposeful physical change.

The broader impact of embodied communication is that it creates a communication pathway for entities that may not possess, activate, or continuously operate a conventional radio transmitter. Information can instead be conveyed through controllable presence, motion, configuration, reflection, blockage, or other observable physical effects. This may enable low-power, low-complexity, passive, or mechanically constrained agents to participate in networked systems in ways that are difficult to support through conventional wireless links alone.

The vision also goes beyond the cellular and RF setting studied in this paper. Any sensing infrastructure may serve as the receiver, including cameras, LiDAR, acoustic arrays, thermal sensors, magnetic sensors, or multimodal perception systems. Likewise, the communicating agent may be a robot, vehicle, human, wearable object, industrial actuator, reconfigurable surface, or embedded device. Embodied communication should therefore be viewed not as a special case of RF sensing, but as a general architectural principle for sensor-rich intelligent environments.

\bibliographystyle{IEEEtran}
\bibliography{References}

\end{document}